\newcommand\R{{\ensuremath {\mathbb R}}}
\newcommand\C{{\ensuremath {\mathbb C}}}
\newcommand\N{{\ensuremath {\mathbb N}}}
\renewcommand\epsilon{\varepsilon}
\renewcommand\phi{\varphi}
\newcommand{\gH}{\mathfrak{H}}
\newcommand{\cB}{\mathcal{B}}
\newcommand{\cL}{\mathcal{L}}
\newcommand{\cE}{\mathcal{E}}
\newcommand\ii{{\ensuremath {\infty}}}
\newcommand\pscal[1]{{\ensuremath{\left\langle #1 \right\rangle}}}
\newcommand{\norm}[1]{ \left| \! \left| #1 \right| \! \right| }
\newtheorem{thm}{Theorem}
\newtheorem{lemma}{Lemma}
\theoremstyle{definition}
\newtheorem{remark}{Remark}
\newcommand\vp{{\vec p}}
\newcommand\vx{{\vec x}}
\newcommand\vy{{\vec y}}
\newcommand\vL{{\vec L}}
\newcommand\ve{{\vec e}}
\newcommand\vA{{\vec A}}
\newcommand\vB{{\vec B}}
\title[Strongly correlated phases in rapidly rotating Bose gases]{Strongly 
correlated phases in rapidly rotating Bose gases}
\thanks{\copyright\, 2009 by
    the authors. This paper may be reproduced, in its entirety, for
    non-commercial purposes.}
\author[M. Lewin]{Mathieu LEWIN}
 \address{CNRS \& Laboratoire de Mathématiques UMR 8088, Université de Cergy-Pontoise, 2 Avenue Adolphe Chauvin, 95302 Cergy-Pontoise Cedex, France.}
  \email{Mathieu.Lewin@math.cnrs.fr}
\author[R. Seiringer]{Robert SEIRINGER}
 \address{Department of Physics, Princeton University, Princeton NJ 08544, USA.}
 \email{rseiring@princeton.edu}
\begin{document}

\date{June 3, 2009}

\begin{abstract}
  We consider a system of trapped spinless bosons interacting with a
  repulsive potential and subject to rotation.  In the limit of rapid
  rotation and small scattering length, we rigorously show that the
  ground state energy converges to that of a simplified model
  Hamiltonian with contact interaction projected onto the Lowest
  Landau Level. This effective Hamiltonian models the bosonic analogue
  of the Fractional Quantum Hall Effect (FQHE). For a fixed number of
  particles, we also prove convergence of states; in particular, in a
  certain regime we show convergence towards the bosonic Laughlin
  wavefunction. This is the first rigorous justification of the
  effective FQHE Hamiltonian for rapidly rotating Bose gases. We
  review previous results on this effective Hamiltonian and outline
  open problems.
\end{abstract}

\maketitle

\section{Introduction}

A fundamental characteristic of trapped Bose gases is
their response to rotation \cite{DalGioPitStr-99,Cooper-08}. When the
angular velocity $\Omega$ becomes large, a transition from a condensed
regime to a highly correlated, uncondensed, phase is expected. The 
behavior of the system then has certain features similar to the Fractional Quantum
Hall Effect that is observed in superconductors submitted to a
magnetic field. For rotating Bose gases, this regime has not yet been
observed experimentally, the corresponding value of $\Omega$ being
unattainable at present. Nevertheless, there has been a lot of
interest in the theoretical understanding of this phenomenon in the
literature (see, e.g., \cite{Cooper-08} for a recent review).

In this paper we present a rigorous study of such a Bose system with a
generic repulsive interaction potential. We show that in a certain
limit, the Hamiltonian of the system can be replaced by a simplified
effective Hamiltonian with a contact interaction, projected onto the
Lowest Landau Level (LLL). This simplified model has been extensively used in the physics literature
\cite{DalGioPitStr-99,CooWil-99,CooWilGun-01,PapBer-01,MorFed-06,RegJol-07,MasMatOuv-07}. We
also prove that in the same limit, the true ground state of the system
converges to the ground state of the model Hamiltonian. In a certain
parameter regime, our analysis provides a rigorous derivation of the
bosonic equivalent of the well-known Laughlin state
\cite{Laughlin-83}.

Let us consider $N$ interacting spinless bosons submitted to a
rotation around the $x^3$ axis and a harmonic trapping
potential. Denoting $\vx = (x^1,x^2,x^3)\in\R^3$, the Hamiltonian of the
system in the rotating frame is given by
\begin{equation}
\sum_{j=1}^N\left[\frac{|\vp_j|^2}{2m}+\frac{m}{2}\left(\omega_{\perp}^2\left(|x_j^1|^2+|x_j^2|^2\right)+\omega_{\parallel}^2|x_j^3|^2\right)-\Omega \ve_3\cdot \vL_j\right] + \sum_{1\leq j< k\leq N}W_{a}(\vx_j-\vx_k).
\label{def:Hamil_phys_const}
\end{equation}
Here $\vL=\vx\times \vp$ is the angular momentum, $\ve_3=(0,0,1)$, $m$
is the mass of the bosons, and $\omega_\perp$ and $\omega_\parallel$
are the trap frequencies. The interaction potential $W_{a}$ is assumed
to be non-negative, i.e., purely repulsive, and to have scattering
length $a$ (see \cite{LieSeiSolYng-05} for a proper definition of the
scattering length). It is natural to introduce a fixed potential $W$
with scattering length $1$ and write $W_{a}(x)=a^{-2}W(x/ a)$. For
convenience we will assume that the angular velocity $\Omega$ is
nonnegative. The above Hamiltonian acts on the space of
permutation-symmetric square-integrable $N$-body wavefunctions.

The Hamiltonian~\eqref{def:Hamil_phys_const} is stable
(i.e., bounded from below) only when $\Omega\leq\omega_\perp$. The
regime of rapid rotation that will be of special interest to us
corresponds to the case when $\Omega$ is very close to the
maximal possible speed $\omega_\perp$, i.e., 
$$
\omega := \frac {\omega_\perp-\Omega}{\omega_\perp} \ll 1 \,.
$$

To simplify certain expressions, we will work with an isotropic
harmonic potential, $\omega_\perp=\omega_\parallel$. Our results would
hold equally well when $\omega_\perp\neq\omega_\parallel$ but
$\omega_\parallel/\omega_\perp\geq \epsilon>0$. Similarly, a
non-harmonic confinement potential in the $x^3$ direction could be
used.
 
It is convenient to chose units such that
$m=\hbar=\omega_\perp=1$. Introducing $\vA(\vx)=(-x^2,x^1,0)$ and
completing the square, our Hamiltonian \eqref{def:Hamil_phys_const} can be written as
\begin{equation}
\boxed{H^N_{\omega,a}:= \sum_{j=1}^N\left[\frac{|\vp_j-\vA(\vx_j)|^2+|x_j^3|^2-3}2+\omega \ve_3\cdot \vL_j\right] + \sum_{1\leq j< k\leq N}W_a(\vx_j-\vx_k).}
\label{def:Hamil}
\end{equation} 
The kinetic energy term of this Hamiltonian is equivalent to that of a
charged particle in a constant magnetic field $\vB=\vec\nabla\times \vA$. The
spectrum of $(|\vp-\vA(\vx)|^2+|x^3|^2)/2$ is purely discrete, its
eigenvalues being $3(j+1/2)$ for $j=0,1,\dots$. They are all
infinitely degenerate. In the definition \eqref{def:Hamil} of our
Hamiltonian we have subtracted the unimportant ground state energy
$3/2$ of the kinetic term.

When the speed of rotation $\Omega$ is not too close to $\omega_\perp$
and the Bose gas is sufficiently dilute, the ground state of
(\ref{def:Hamil}) is known to exhibit Bose-Einstein condensation, with
condensate wavefunction described by ground states of the
Gross-Pitaevskii functional \cite{Gross-61,Pitaevskii-61}
\begin{equation}
  \cE^{\rm GP}(\phi)=\pscal{\phi,\left(\frac{|\vp-\vA(\vx)|^2+|x^3|^2-3}2+\omega \ve_3\cdot \vL\right)\phi}+\frac{g}2\int_{\R^3}|\phi|^4,
\label{def:Gross_Pit} 
\end{equation}
where $g=4\pi Na$. In the limit $N\to \infty$ with $0<\omega\leq 1$ and $g>0$
fixed this was proved in \cite{LieSei-02,LieSei-06}.

The properties of the Gross-Pitaevskii ground state in the rapidly
rotating regime $\omega\to0$ were intensely studied in the literature,
both from a numerical \cite{ButRok-99,CooKomRea-04,AftBlaDal-05} and
an analytical \cite{AftBlaNie-06,AftBla-06} point of view.  As the
speed of rotation increases, more and more vortices appear and the
wavefunction acquires a higher angular momentum. The location of these
vortices is conveniently studied in the \emph{Lowest Landau Level}
(LLL) approximation where one restricts $\phi$ to lie in the kernel of
$|\vp-\vA(\vx)|^2+|x^3|^2-3$.  This LLL approximation is justified
\cite{AftBla-08} when $\omega\ll 1$ and $g\omega\ll 1$, the number of
vortices being then proportional to
$$N_v\sim \sqrt{\frac{g}{\omega}}\sim \sqrt{\frac{Na}{\omega}}.$$

The Gross-Pitaevskii functional (\ref{def:Gross_Pit}) is expected to
be an accurate description of the ground state of the many-body system
(\ref{def:Hamil}) provided the number of vortices is much smaller than
the number of particles in the system, i.e., when
\begin{equation}
 \frac{a}{N\omega}\ll1.
\label{cond_condensation}
\end{equation}
Within the LLL approximation, this was recently shown in
\cite{LieSeiYng-09} to be indeed the case. In terms of the
\emph{filling factor} $\nu=N^2/(2L_{\rm tot})$ \cite{CooWilGun-01},
\eqref{cond_condensation} corresponds to $\nu\gg1$.

When $a/(N\omega)$ is not small, a completely different regime is
expected. Evidence of strongly correlated states was found in exact
diagonalization studies of small systems. As the rotation frequency
$\Omega$ is increased from 0 to its upper limit $\omega_\perp$, the
ground state encounters a series of transitions between certain values
of the angular momentum. The behavior of the system is similar to the
\emph{Fractional Quantum Hall Effect} in fermionic systems, and
usually modeled by an effective Hamiltonian with contact interaction
in the LLL. The rigorous derivation of this effective Hamiltonian for
Bose gases with generic repulsive two-body interactions is the main
purpose of this paper.

\section{Main Results}

\subsection{Derivation of the Effective Hamiltonian on the LLL}

The ground state energy in the bosonic sector is given by
\begin{equation}
  \boxed{E_N(\omega,a):=\inf\sigma_{\bigvee_1^NL^2(\R^3,\C)}^{\phantom{\sigma_{\bigvee_1^NL^2(\R^3,\C)}}}\left(H^N_{\omega,a}\right)}
\label{def_GS_energy}
\end{equation}
where $\bigvee_1^NL^2(\R^3,\C)$ denotes the \emph{symmetric} tensor product and $H^N_{\omega,a}$ was defined above in \eqref{def:Hamil}. 

We will compare the ground state energy of the above Hamiltonian
\eqref{def:Hamil} with the simplified model consisting in restricting
the wavefunction to the \emph{$N$-body Lowest Landau Level} (LLL) and
replacing $W_a$ by $4\pi a$ times a contact interaction potential.
The LLL is defined as the ground state eigenspace of the kinetic part
of the operator (\ref{def:Hamil}) at $\omega=0$.  This subspace of
$\bigvee_1^NL^2(\R^3,\C)$ reads
\begin{multline}
\gH_N:=\bigg\{ \Psi(\vx_1,\dots,\vx_N)=F(x^1_1+ix_1^2,\dots,x^1_N+ix_N^2)e^{-\sum_{i=1}^N\frac{|\vx_i|^2}2}\in L^2(\R^{3N})\ :\\ (z_1,\dots,z_N)\mapsto F(z_1,\dots,z_N)\text{ is holomorphic and symmetric}\bigg\} 
\end{multline}
where we denote $\vx=(x^1,x^2,x^3)\in\R^3$ as before. We will use use
the notation $z=x^1+ix^2\in\C$ and we will sometimes identify it with
$(x^1,x^2)\in\R^2$. For any $\Psi\in\gH_N$, we have by definition
$$
\sum_{j=1}^N\left[\frac{|\vp_j-\vA(\vx_j)|^2+|x_j^3|^2-3}2\right]\Psi=0\,.
$$
The ground state energy of the simplified effective model in the LLL is given by
\begin{equation}
\boxed{E^{\rm LLL}_N(\omega,a):=\inf_{\substack{\Psi\in\gH_N\\ \norm{\Psi}=1}}\pscal{\Psi,\left(\omega\sum_{j=1}^N \ve_3\cdot \vL_j+4\pi a \sum_{1\leq i<j\leq N}\delta(\vx_i-\vx_j)\right)\Psi}.}
\label{def:GS_energy_LLL}
\end{equation}
Note that although it makes no sense to use a delta potential in the original
Hilbert space, functions in the space $\gH_N$ are all smooth, hence
$$
\pscal{\Psi,\delta(\vx_1-\vx_2)\Psi}=\int\cdots\int |\Psi(\vx_2,\vx_2,\vx_3,\dots,\vx_N)|^2dx_2\cdots dx_N
$$
makes perfect sense and defines a bounded selfadjoint operator. As
we will discuss in the next section, for any $\omega\geq 0$ and $a\geq 0$ there exists a
ground state $\Psi\in\gH_N$ for the problem
\eqref{def:GS_energy_LLL}. 

We emphasize that (\ref{def:GS_energy_LLL}) is \emph{not} obtained by
restricting $H^N_{\omega,a}$ to the LLL. For small scattering length $a$ such a
restriction would lead to a similar expression but with the wrong
prefactor $\int W_a$ instead of $4\pi a$ in front of the
$\delta$-interaction (see Remark \ref{rmk:lower_bound_lemma} below). In order to obtain the scattering length, it is
important to note that the LLL restriction is unphysical on length
scales much smaller than the effective \lq\lq magnetic length\rq\rq,
which is $1$ in our units. If $a\ll 1$, the scattering process is
unaffected by the rotation of the system and hence leads to the scattering length as
an effective coupling constant.

Our main result is the following.

\begin{thm}[{\bf Validity of effective LLL model}]\label{thm:bounds}
Let $W$ be a nonnegative radial function such that $\int_{|\vx|>R} W(\vx)\,dx<\ii$ for some $R>0$, with scattering length 1. We define $W_a:=a^{-2}W(\,\cdot\, /a)$ and
\begin{equation}
\kappa:=\frac{a}{N\omega}.
\label{def:kappa}
\end{equation}
\medskip

\noindent $(i)$ \textbf{Upper bound.}  Assume that $\eta:=\kappa^{1/4} a N^{1/2}\leq 1$. For $\kappa^{-3/2}a< C^{-1}$ one has 
\begin{multline}
E_N(\omega,a)\leq E^{\rm LLL}_N(\omega,a) \left(1-Ca \kappa^{-3/2} \right)^{-1} \\ \times \left( 1+ \frac{C \eta}{\min\{1,\kappa N^2\}}\left[1+ \frac {\kappa^{-3/4}}{\sqrt N} + \frac 1\eta \int_{|\vx|\geq (2/\eta)^{3/4}} W \right]
\right)
\label{eq:upper_bound}
\end{multline}
for some universal constant $C>0$.

\medskip

\noindent $(ii)$ \textbf{Lower bound.}  Let $r=\min\{1,\kappa a^{-2/3}\}$. Then  
 \begin{equation}
 E_N(\omega,a ) \geq E^{\rm LLL}_N(\omega,a)\left(1- C \left[ \frac{ N a^{1/3}}r + a^{1/9} r \right] - \frac{1}{4\pi}\int_{|\vx|\geq r^{1/6} a^{-8/9}} W \right)
\label{eq:lower_bound}
\end{equation}
for some universal constant $C>0$.
\end{thm}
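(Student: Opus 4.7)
The plan is to follow the Dyson--Lieb--Yngvason strategy for dilute Bose gases, adapted to the LLL geometry. The essential physics is a separation of scales, $a \ll b \ll 1$, where $b$ is an intermediate scale to be optimized and $1$ is the magnetic length: the scattering length $a$ is generated by two-body correlations on scale $\sim a$, where the magnetic structure of the kinetic operator is subdominant (effectively $|\vp - \vA(\vy)/2|^2 \approx -\Delta_\vy$), while the LLL projection acts on scale $\sim 1$. The upper and lower bounds are handled separately.

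\textbf{Upper bound.} Let $\Psi_{\rm LLL}$ be a minimizer of the LLL problem (existence asserted in the preceding text). I would build a Jastrow-type trial state
\begin{equation*}
\Psi_{\rm tr}(\vx_1,\dots,\vx_N) := \Big(\prod_{1\le i<j\le N} f(\vx_i - \vx_j)\Big)\,\Psi_{\rm LLL}(\vx_1,\dots,\vx_N),
\end{equation*}
where $f$ is radial: $f$ solves the zero-energy scattering equation $(-\Delta + W_a)f=0$ for $|\vy|<b$, with $f\equiv 1$ for $|\vy|\ge b$, matched so $f(b)\approx 1 - a/b$. The key identity $\int_{\R^3} (|\nabla f|^2 + W_a f^2)\, d\vy = 4\pi a(1-a/b)$ converts pair kinetic plus interaction into the desired $4\pi a$ contact coefficient. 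Expanding $\pscal{\Psi_{\rm tr}, H^N_{\omega,a}\Psi_{\rm tr}}$, the single-particle kinetic $(|\vp_j-\vA(\vx_j)|^2+|x_j^3|^2-3)/2$ annihilates $\Psi_{\rm LLL}$, so the only surviving pieces come from the cross-terms in $\nabla f$; the $\vA$-contributions vanish by radial symmetry of $f$, and the rotational term $\omega\sum\vL_j$ commutes with $f$ as the Jastrow factor depends only on relative coordinates. Three-body clusters (two $f$-factors on overlapping triples) and the normalization denominator produce corrections controlled by $\eta = \kappa^{1/4}aN^{1/2}$, via pointwise density estimates for LLL wavefunctions; the truncation of $f$ yields the $W$-tail term.

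\textbf{Lower bound.} This half is more delicate. Step 1: a magnetic Dyson-type lemma for the two-body relative motion,
\begin{equation*}
\epsilon\bigl(|\vp - \vA(\vy)/2|^2\bigr) + W_a(\vy) \ \ge\ 4\pi a (1-\epsilon)\,U_R(\vy)(1-\mathrm{err}),
\end{equation*}
where $U_R\ge 0$ has unit integral, is supported at scale $R$ with $a \ll R \ll 1$, and the error encodes $\int_{|\vy|\ge R}W$. The magnetic term is negligible on $\mathrm{supp}(U_R)$ since $|\vA(\vy)|\lesssim R \ll 1$. Step 2: many-body redistribution. Writing $\sum_j T_j = \epsilon\sum_j T_j + (1-\epsilon)\sum_j T_j$, apportion the first piece to pairs $(i,j)$ via a nearest-neighbor partition of unity so each pair receives enough localized kinetic energy to apply Step 1. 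This yields
\begin{equation*}
H^N_{\omega,a} \ \ge\ (1-\epsilon)\sum_j T_j + \omega\sum_j\ve_3\cdot\vL_j + 4\pi a(1-\mathrm{err})\sum_{i<j}U_R(\vx_i - \vx_j).
\end{equation*}
Step 3: project to the LLL. Since $T_j\ge 0$ with kernel the LLL, off-LLL modes carry a spectral gap, and combined with the fact that LLL functions vary on scales $\sim 1 \gg R$, one replaces $U_R$ by a contact interaction and arrives at the LLL energy up to errors of the stated form.

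\textbf{Main obstacle.} The principal difficulty lies in the lower bound: optimizing the three small parameters ($\epsilon$, $R$, and the LLL-projection error) to recover the explicit rates $Na^{1/3}/r$ and $a^{1/9}r$ with $r=\min\{1,\kappa a^{-2/3}\}$. The magnetic Dyson lemma requires care since $-\Delta$ is replaced by $|\vp - \vA/2|^2$, and the many-body redistribution must be uniform in configuration, typically via a Besicovitch-style covering. The central conceptual point is that the renormalization $\int W_a \mapsto 4\pi a$ --- which distinguishes the true ground state from the naive LLL restriction of $W_a$ (whose prefactor would instead be $\int W_a$, see Remark~\ref{rmk:lower_bound_lemma}) --- is realized precisely by this trade-off between a sliver $\epsilon T$ of the kinetic energy and $W_a$.
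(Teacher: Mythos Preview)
Your upper bound is the paper's approach: a Jastrow factor built from the zero-energy scattering solution cut off at scale $b$, multiplied against an LLL minimizer, then $b$ optimized. One imprecision: the $\vA$-cross terms do not simply ``vanish by radial symmetry of $f$''; the paper instead invokes the integration-by-parts identity $\langle S\Phi, H^N_{\omega,0} S\Phi\rangle = \tfrac12\sum_j\int |\vec\nabla_j S|^2 |\Phi|^2 + \Re\langle S^2\Phi, H^N_{\omega,0}\Phi\rangle$ (valid because $S$ is real), after which $H^N_{\omega,0}\Phi$ is a scalar multiple of $\Phi$. The error terms are then controlled by Lemma~\ref{lem:comparison_delta} and Carlen's inequality rather than generic ``pointwise density estimates''.

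Your lower bound differs from the paper in a substantive way, and the proposal has a gap precisely where the paper introduces its main new idea. You propose a magnetic Dyson lemma $\epsilon|\vp-\vA(\vy)/2|^2 + W_a \ge 4\pi a\, U_R$ with a \emph{local} soft potential $U_R$, justified by ``$|\vA(\vy)|\lesssim R\ll 1$''. The paper explicitly identifies the obstruction: the one-body operator $h_0=(|\vp-\vA|^2+|x^3|^2-3)/2$ is not locally positive (on balls with Neumann conditions), so one cannot simply borrow a fraction of it for a standard Dyson argument, and saying the magnetic term is small on scale $R$ does not by itself yield an operator inequality.

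The paper's route is to rewrite $\langle\Psi,h_0\Psi\rangle = \tfrac12\int e^{-|\vx|^2}\bigl(|\partial_{x^3}\psi|^2+|\partial_{\bar z}\psi|^2\bigr)$ with $\psi=e^{|\vx|^2/2}\Psi$, which is manifestly locally non-negative but contains only the $\partial_{\bar z}$ derivative in the plane. The Dyson-type Lemma~\ref{lem:Dyson} then produces on the right side a \emph{non-local} operator $U_{ij}$ (a weighted spherical average of $e^{-\bar s z}\psi$) rather than a multiplication operator. The payoff is that on $\gH_N$, holomorphy in $z$ collapses this spherical average by the mean-value property, so $U_{ij}$ restricted to the LLL is \emph{exactly} the contact interaction---no further approximation of $U_R$ by $\delta$ is needed. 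Your local $U_R$ would instead require an extra step (essentially Lemma~\ref{lem:comparison_delta} in reverse) with its own error. The final perturbative splitting $P(A+B)P + Q(\cdots)Q$ in your Step~3 matches the paper.
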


What Theorem \ref{thm:bounds} says is that when $\kappa$ stays away
from zero (for instance $\kappa>0$ fixed) and when $a$ is small enough
(depending on the particle number $N$), then one can replace the
problem of minimizing $H^N_{\omega,a}$ with a generic interaction of
scattering length $a$ by the study of a simplified Hamiltonian acting
on the LLL, with a contact interaction of strength
$4\pi a$. The latter model has some very specific features that we
will recall in the next section.

For fixed $\kappa$, the leading order correction in our upper bound is
of the order $aN^{1/2}$ as long as $W$ decays at least as
$|\vx|^{-3-4/3}$ at infinity; it is $(aN^{1/2})^{3\epsilon/4}$ if $W$
decays as $|\vx|^{-3-\epsilon}$ for $0<\epsilon< 4/3$ instead.  The
error bounds in the lower bound are significantly worse. For fixed
$\kappa$ the leading error term is $N a^{1/3}$. It remains a
challenging open problem to derive bounds that display a better $N$
dependence.  These will require a better understanding of the FQHE
regime for large $N$.  One would expect that, for $\kappa$ fixed,
there exist error bounds that are independent of $N$.

The proof of Theorem \ref{thm:bounds} uses several previous ideas
\cite{LieSeiSolYng-05}. The upper bound requires a two-scale trial
function, as suggested first by Dyson in \cite{Dyson-57}, in order to
obtain the scattering length from $W_a$. The fact that the ground
state in $\gH_N$ for \eqref{def:GS_energy_LLL} is not very well known
(contrarily to the condensed Gross-Pitaevskii case) is an important
obstacle, however. As usual, the lower bound is the hardest part and
consequently our conditions on $a$ are more restrictive.

The proof of Theorem \ref{thm:bounds} will be given in Section
\ref{sec:proof_Thm}. More general upper and lower bounds on $E_N(\omega,a)$ are stated in
\eqref{eq:last_estimate_upper_bound}--\eqref{eq:last_ub2} and
\eqref{eq:last_lower_bound}, respectively.

\medskip

\subsection{Effective Hamiltonian on the LLL and Convergence of States}

The result of Theorem~\ref{thm:bounds} can be extended to obtain
information not only the ground state energy but also on the
corresponding eigenfunctions. Before we state our result on the
convergence of ground states in the limit $\omega\to0$ and $a\to0$
with $N$ fixed, we recall in this section several important properties
of the effective Hamiltonian on the LLL.

It is convenient to introduce the  \emph{Bargmann space} \cite{Bargmann-62}
\begin{multline*}
\cB_N:=\bigg\{ F:\C^N\to\C^N \text{ holomorphic and symmetric}\ :\\ \int_\C\cdots \int_\C |F(z_1,\dots,z_N)|^2e^{-\sum
_{j=1}^N|z_j|^2}dz_1\cdots dz_N<\ii\bigg\} 
\end{multline*}
endowed with the scalar product 
$$\pscal{F,G}_{\cB_N}:=\int_\C dz_1\cdots\int_\C dz_N\; \overline{F(z_1,\dots,z_N)} G(z_1,\dots,z_N)e^{-\sum_{j=1}^N|z_j|^2},$$
and its associated norm.
It can easily be checked that if $F\in\cB_N$ then the function $\Psi$ defined by 
$\Psi=\pi^{-N/4}e^{-\sum_{j=1}^N|\vx_j|^2/2}F\in\gH_N$ satisfies $\|\Psi\|_{L^2(\R^{3N})} = \|F\|_{\cB_N}$ and 
\begin{equation}
 \sum_{j=1}^N\left[\frac{|\vp_j-\vA(\vx_j)|^2+|x_j^3|^2-3}2+\omega \ve_3\cdot \vL_j\right] \Psi=\omega\;\left(\sum_{j=1}^N z_j\partial_{z_j}F\right)e^{-\sum_{i=1}^N\frac{|\vx_i|^2}2}.
\label{eq:angular_momentum} 
\end{equation}
The delta interaction potential is defined on $\cB_N$ as follows:
\begin{equation}
\left(\sum_{i< j}\delta_{ij}\right)F:=\frac{1}{(2\pi)^{3/2}}\sum_{i< j}F\left(z_1,\dots,\frac{z_i+z_j}{2},\dots, \frac{z_i+z_j}{2},\dots,z_N\right).
\label{def:delta_ij}
\end{equation}
The prefactor has been chosen to ensure that 
$$
\pscal{F,\delta_{12}F}_{\cB_2}=\int_{\R^{3}}|\Psi(\vx,\vx)|^2 dx\,.
$$
It can easily be seen that  $0\leq \delta_{12}\leq (2\pi)^{-3/2}$, hence $\delta_{12}$ is a bounded self-adjoint operator on $\cB_2$.
The model Hamiltonian acting on $\cB_N$ is defined as
\begin{equation}
\boxed{\tilde{H}^N_{\omega,a}:=\omega\sum_{j=1}^Nz_j\partial_{z_j}+4\pi a\sum_{1\leq i<j\leq N}\delta_{ij} }
\label{def:Hamil_LLL}
\end{equation}
and its ground state energy equals the LLL energy
(\ref{def:GS_energy_LLL}) introduced in the previous section:
$$\boxed{E^{\rm LLL}_N(\omega,a):=\inf\sigma_{\cB_N}(\tilde{H}^N_{\omega,a}).}$$

We introduce, for convenience, the notation
$$\cL_N:=\sum_{j=1}^Nz_j\partial_{z_j}\quad\text{and}\quad \Delta_N:=\sum_{1\leq i<j\leq N}\delta_{ij}$$
for the total angular momentum and the contact interaction potential
in the LLL, respectively. Because of rotation invariance of the
interaction these two operators commute on $\cB_N$, i.e.,
$[\cL_N,\Delta_N]=0$.  Hence the ground state energy $E^{\rm
  LLL}_N(\omega,a)$ of our Hamiltonian
$\tilde{H}^N_{\omega,a}=\omega\cL_N+4\pi a\Delta_N$ is obtained
by looking at the \emph{joint spectrum} of $\Delta_N$ and $\cL_N$.  If
we denote by $\Delta_N(L)$ the lowest eigenvalue of the operator
$\Delta_N$ in the sector of total angular momentum $L$, we get
$$E^{\rm LLL}_N(\omega,a)=\inf_{L\in\N}\big\{\omega L+4\pi a\Delta_N(L)\big\}.$$
Multiplying any common eigenstate of $\cL_N$ and $\Delta_N$ by the
center of mass $\sum_{j=1}^Nz_j$, one sees that
$\sigma\big(\Delta_N\big)_{\restriction \ker(\cL_N-L)}\subset
\sigma\big(\Delta_N\big)_{\restriction \ker(\cL_N-L-1)}$. Therefore,
$L\mapsto \Delta_N(L)$ is nonincreasing.

A sketch of the general form of the joint spectrum of $\Delta_N$ and
$\cL_N$ is shown in Figure \ref{fig:joint_spectrum}. The possible
ground states for $\tilde{H}^N_{\omega,a}$ are those whose values of
$\cL_N$ and $\Delta_N$ lie on the so-called \emph{yrast
  curve}\footnote{In the literature the graph of $L\mapsto\Delta_N(L)$
  is sometimes called the yrast curve. We keep this name for the
  convex hull which contains all the possible ground states of
  $\tilde{H}^N_{\omega,a}$.} \cite{Mottelson-99} which is the graph of
the convex hull of $L\mapsto\Delta_N(L)$. We can write
$$\tilde{H}^N_{\omega,a}=4\pi Na\; \left(\frac{1}{4\pi \kappa}\frac{\cL_N}{N^2}+\frac{\Delta_N}{N}\right)$$
where, as before $\kappa=a/(N\omega)$. Thus the ground state which
will be picked by the system only depends  on the value
of $\kappa$. It jumps from one state to another when $\kappa$
is varied. The FQHE regime corresponds to $\kappa\sim 1$ in which case
$\cL_N\sim N^2$ and $\Delta_N\sim N$, hence $E^{\rm
  LLL}_N(\omega,a)\sim Na$.

\begin{figure}[ht]
\small
\input{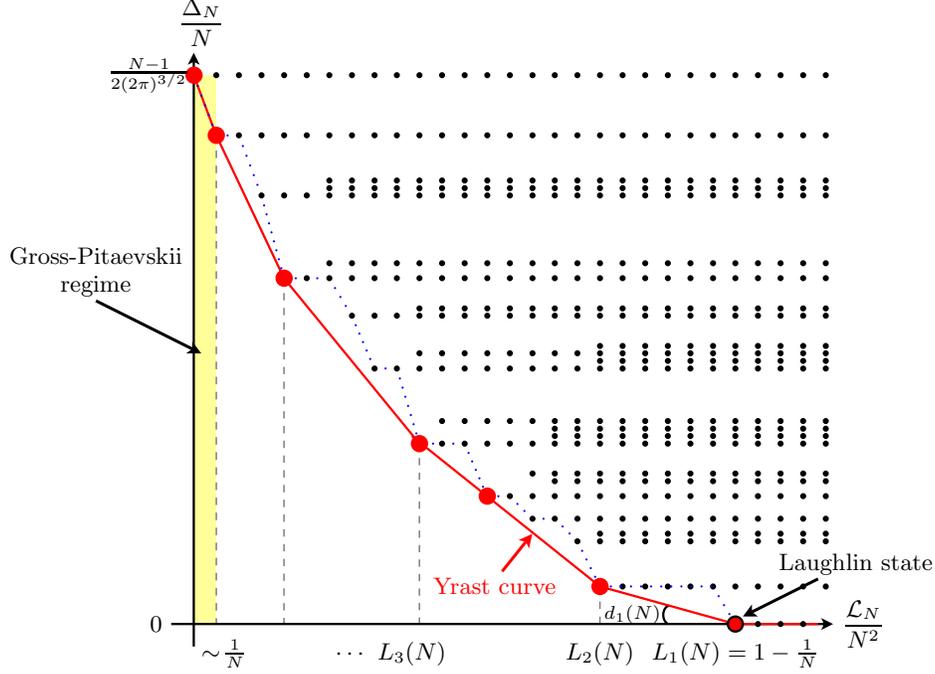}
\caption{\sl General form of the joint spectrum of $\Delta_N/N$ and
  $\cL_N/N^2$. The dashed curve is the graph of
  $\ell\mapsto\Delta_N(\ell N^2)/N$, whereas the solid one is the
  \emph{yrast curve}. Points of the joint spectrum lying on the yrast
  curve are emphasized by thick dots. This figure represents only a sketch, numerical studies for the joint spectrum of $\Delta_N$
  and $\cL_N$ can be found in
  \cite{VieHanRei-00,RegJol-04,BakYanLan-07}. \label{fig:joint_spectrum}}
\end{figure}

\begin{figure}[ht]
\small
\input{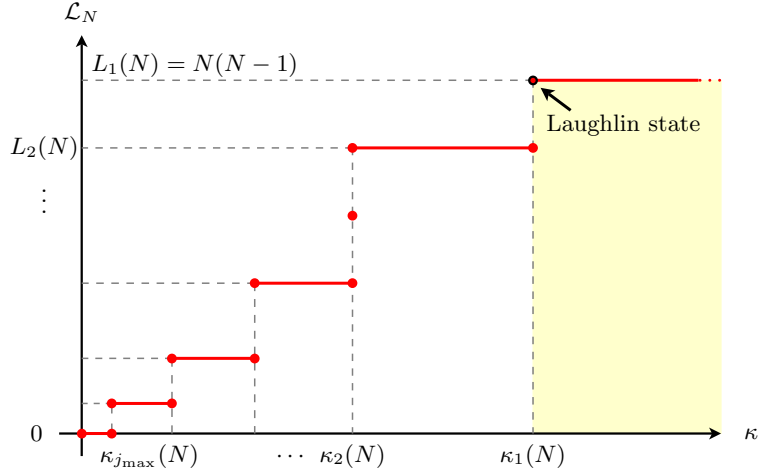}
\caption{\sl Value of the angular momentum of the ground state(s) of
  $\tilde{H}^N_{\omega,a}$, depending on the value of $\kappa$. The
  Laughlin state is the unique ground state for all
  $\kappa>\kappa_1(N)$. The constant function is the unique ground
  state for all $0\leq \kappa<\kappa_{j_{\rm
      max}}(N)$. \label{fig:kappa_L}}
\end{figure}

The kernel of $\Delta_N$ is obviously given by
$$\ker(\Delta_N)=\left\{F(z_1,\dots,z_N)\prod_{1\leq i< j\leq N}(z_i-z_j)^2\ |\ F\text{ holomorphic and symmetric}\right\}.$$
The function which has the lowest angular momentum among these
functions (hence lies on the yrast curve) is the (bosonic)
\emph{Laughlin wavefunction}
$$\boxed{F^N_{\rm Lau}(z_1,\dots,z_N)=k_N\prod_{1\leq i< j\leq N}^{\phantom{N}}(z_i-z_j)^2}$$
where $k_N$ is a normalization factor. It satisfies
$$\cL_NF^N_{\rm Lau}=N(N-1)\,F_{\rm Lau}^N.$$
The Laughlin wavefunction is the unique ground state of
$\tilde{H}^N_{\omega,a}$ as soon as $\kappa>\kappa_1(N):=-1/(4\pi
d_1(N))$, where $d_1(N)$ is the (unknown) left derivative at $\ell=1-1/N$ of the
convex hull of $\ell\mapsto \Delta_N(N^2\ell)/N$. The \emph{filling factor} 
\cite{CooWilGun-01} of the Laughlin function is given by
$$
\nu_{\rm Lau}=\frac{N^2}{2\pscal{ F^N_{\rm Lau},\cL_NF^N_{\rm
      Lau}}}=\frac{1}{2(1-1/N)}\xrightarrow{N\to \ii}\frac12\,.
$$

In general, the convex hull $\tilde{\Delta}_N$ of the function
$L\mapsto \Delta_N(L)$ is piecewise linear and we may define similarly
$\kappa_1(N)>\kappa_2(N)>\cdots >\kappa_k(N)$ by the formula
$\kappa_j(N):=-1/(4\pi d_j(N))$ where $d_1(N)>d_2(N)>\cdots>d_k(N)$
are the successive left derivatives of the function
$\ell\mapsto\tilde\Delta_N(N^2\ell)/N$.  To any $\kappa_j(N)$ we can
associate a unique total angular momentum $L_j(N)$ which is the
highest among states lying on the yrast curve and having a left
derivative equal to $d_j(N)$. The corresponding eigenspace is easily
seen to be the ground state eigenspace of $\tilde{H}^N_{\omega,a}$ when
$\kappa\in(\kappa_{j+1}(N),\kappa_j(N))$.  When $\kappa=\kappa_j(N)$,
the ground state eigenspace is the one containing all states lying on the
yrast line with slope $d_j(N)$. It does not have a unique 
angular momentum. These statements are illustrated in Figure
\ref{fig:kappa_L}.

The only state having $L=0$ is the condensed state
$F(z_1,\dots,z_N)=1$, hence
$\Delta_N(0)=(2\pi)^{-3/2}N(N-1)/2$. Also $\Delta_N(1) = (2\pi)^{-3/2}N(N-1)/2$, with unique state $F(z_1,\dots,z_N) = \sum_{i=1}^N z_i$. Moreover, it is well known
\cite{BerPap-99,PapBer-01,HusVor-02} that
\begin{equation}\label{pap}
\Delta_N(L)=\frac{1}{2 (2\pi)^{3/2}}N\left(N-1-\tfrac 12 L\right) \quad \text{for $2\leq L\leq N$.} 
\end{equation}
For the proof, one notes that $\delta_{12}$ has eigenvalues $0$ and $(2\pi)^{-3/2}$ and commutes with the relative angular momentum $L_{12}  =(z_1-z_2) \left( \partial_{z_1} -\partial_{z_2} \right)/2$; in fact, $\delta_{12}$ is nonzero only on the subspace where $L_{12} = 0$. On {\it 
symmetric} functions of $z_1$ and $z_2$, the smallest non-zero eigenvalue of
$L_{12}$ is 2, hence $(2\pi)^{3/2}\delta_{12} \geq 1 - L_{12}/2$. Summing over all pairs
we get 
$$ 
(2\pi)^{3/2} \sum_{i<j} \delta_{ij} \geq \frac{N(N-1)}{2} - \frac{N \cL_N}{4} + \frac 14
\left(\sum_i z_i\right) \left(\sum_i \partial_{z_i} \right) \,.
$$
The very last term is
non-negative, which yields (\ref{pap}) as a lower bound. Finally, one checks that for $2\leq L\leq N$ the lower bound is, in fact, an equality for the states $\mathcal{S} (z_1-z_{\rm CM})\cdots (z_L-z_{\rm CM})$, where $\mathcal{S}$ denotes symmetrization and $z_{\rm CM}:=N^{-1}\sum_{i=1}^N z_i$.

No exact formula for $\Delta_N(L)$ is known if $L > N$. For large
$N$ and $L\ll N^2$ the yrast line was studied in
\cite{LieSeiYng-09}, where it is proved that in this limit the
Gross-Pitaevskii energy becomes exact. The result in \cite{LieSeiYng-09} implies that in this regime the convex hull of $\Delta_N(L)$ is proportional to $N^3/L$. 

Very little is known about the yrast curve for $L\sim N^2$, in
particular concerning lower bounds. Upper bounds have been derived
using certain trial states (Pfaffian, composite fermions
\cite{CooWil-99,RegJol-04,RegChaJolJai-06}) which have been shown
numerically to have a large overlap with (some of) the true
eigenstates of the yrast curve, at least for small $N$. A rigorous
understanding of the properties of the true eigenstates is still
missing, however. It particular, it remains an open problem to
investigate whether $\liminf_{N\to\ii}d_1(N)>0$. This would imply
that the yrast curve has a discontinuous derivative at the Laughlin
state. It would also show a certain robustness of the Laughlin state,
in the sense that this state is the ground state for fixed
$\kappa>\kappa_1:=\limsup_{N\to\ii}\kappa_1(N)$, independently of the particle number $N$.

\medskip

This concludes our review of the properties of the effective
Hamiltonian (\ref{def:Hamil_LLL}). To state our last result, we will
denote by $P_N(\kappa)$ the (finite dimensional) orthogonal projector
in $L^2(\R^{3N})$ on the ground eigenspace of the operator $(4\pi
\kappa)^{-1}\cL_N/N^2+\Delta_N/N$, multiplied by $\pi^{-N/4}
e^{-\sum_{j=1}^N|\vx_j|^2/2}$. This orthogonal projector is constant
for all $\kappa\in(\kappa_{j+1}(N),\kappa_j(N))$. For
$\kappa>\kappa_1(N)$, it is just the projector on the $N$-body
Laughlin function
$$P_N(\kappa)=|\Psi^N_{\rm Lau}\rangle\langle\Psi^N_{\rm Lau}|$$
where
$$\Psi^N_{\rm Lau}(\vx_1,\dots,\vx_N)=k_N\pi^{-N/4}\prod_{1\leq i<j\leq N}(z_i-z_j)^2e^{-\sum_{k=1}^N|\vx_k|^2/2}.$$
The following theorem is a rather straightforward consequence of our
proof of Theorem~\ref{thm:bounds}.

\begin{thm}[{\bf Convergence of States and Fractional Quantum Hall
    Effect}]\label{cor:CV} Let $\kappa>0$ and $N\geq2$ fixed, and
  denote by $\Psi_{\omega,a}^N$ any chosen sequence of ground states
  of the Hamiltonian $H^N_{\omega,a}$ in $\bigvee_1^NL^2(\R^3)$. Then
  we have
\begin{equation}
\lim_{\substack{a\to0\\ a/(N\omega)\to\kappa}}\norm{\Psi^N_{\omega,a}-P_N(\kappa)\Psi^N_{\omega,a}}=0.
\end{equation}
In particular, if $\kappa>\kappa_1(N)$, one has $\Psi^N_{\omega,a}\to\Psi_{\rm Lau}^N$ when $a\to0$ and $a/(N\omega)\to\kappa$, up to a correct choice of a phase for $\Psi^N_{\omega,a}$.
\end{thm}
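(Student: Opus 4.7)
The approach is to combine the energy estimates of Theorem \ref{thm:bounds} with a two-step spectral-gap argument. Writing $\Pi$ for the orthogonal projection from $L^2(\R^{3N})$ onto $\gH_N$, we have $P_N(\kappa)\leq \Pi$, so
$$\|(1-P_N(\kappa))\Psi^N_{\omega,a}\|^2=\|\Pi^\perp\Psi^N_{\omega,a}\|^2+\|(\Pi-P_N(\kappa))\Psi^N_{\omega,a}\|^2,$$
and it is enough to show that each summand tends to zero as $a\to 0$ with $a/(N\omega)\to\kappa$ and $N$ fixed.

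The first summand is controlled by the Landau-level gap. The single-particle kinetic operator has spectrum $\{3n\}_{n\geq 0}$ with angular-momentum eigenvalues bounded below by $-n$ inside the $n$-th Landau level, so the one-body operator $(|\vp-\vA|^2+|x^3|^2-3)/2+\omega\ve_3\cdot\vL$ is nonnegative and at least $(3-\omega)$ on the orthogonal complement of the LLL. Summing over particles (using that on symmetric $N$-body states the sum of single-particle LLL-complement projectors dominates the $N$-body LLL-complement projector) and using $W_a\geq 0$ yields $H^N_{\omega,a}\geq(3-\omega)\Pi^\perp$. Since the upper bound in Theorem \ref{thm:bounds} gives $E_N(\omega,a)=O(Na)\to 0$ for $N$ fixed, we conclude $\|\Pi^\perp\Psi^N_{\omega,a}\|^2\leq E_N(\omega,a)/(3-\omega)=O(Na)\to 0$.

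For the second summand, I revisit the lower-bound proof of Theorem \ref{thm:bounds} and extract, in place of a bound on $E_N(\omega,a)$ alone, the state-dependent inequality
$$\langle\Psi^N_{\omega,a},H^N_{\omega,a}\Psi^N_{\omega,a}\rangle\geq(1-o(1))\,\langle\Pi\Psi^N_{\omega,a},\tilde{H}^N_{\omega,a}\Pi\Psi^N_{\omega,a}\rangle-o(E^{\rm LLL}_N(\omega,a)),$$
with the same error terms as in \eqref{eq:lower_bound}. Combined with the upper bound $E_N(\omega,a)\leq E^{\rm LLL}_N(\omega,a)(1+o(1))$, this forces
$$\langle\Pi\Psi^N_{\omega,a},\tilde{H}^N_{\omega,a}\Pi\Psi^N_{\omega,a}\rangle\leq E^{\rm LLL}_N(\omega,a)+o(E^{\rm LLL}_N(\omega,a)).$$
Rewriting $\tilde{H}^N_{\omega,a}=4\pi a\bigl(\cL_N/(4\pi\kappa N)+\Delta_N\bigr)$, the operator in parentheses has, for fixed $N$ and $\kappa$, a finite-dimensional ground eigenspace $P_N(\kappa)\gH_N$ and a strictly positive spectral gap $g_N(\kappa)$ above it (this is where the degeneracy structure of the yrast curve matters: at critical values $\kappa=\kappa_j(N)$ the enlarged ground space is absorbed into the definition of $P_N(\kappa)$). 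Hence $\tilde{H}^N_{\omega,a}\geq E^{\rm LLL}_N\Pi+4\pi a\,g_N(\kappa)(\Pi-P_N(\kappa))$, and $\|(\Pi-P_N(\kappa))\Psi^N_{\omega,a}\|^2\leq o(Na)/(4\pi a\,g_N(\kappa))=o(1)$ for $N$ fixed.

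The main obstacle is the state-dependent lower bound above: the bare proof of Theorem \ref{thm:bounds} is designed to bound energies, whereas here one must go into the Dyson-type two-scale construction and keep track of the LLL-projection of the test state throughout, verifying that the same error estimates as in \eqref{eq:lower_bound} remain valid with $E^{\rm LLL}_N(\omega,a)$ replaced by $\langle\Pi\Psi,\tilde{H}^N_{\omega,a}\Pi\Psi\rangle$. Once both summands are controlled, the Laughlin case $\kappa>\kappa_1(N)$ is immediate since $P_N(\kappa)=|\Psi^N_{\rm Lau}\rangle\langle\Psi^N_{\rm Lau}|$, giving $\Psi^N_{\omega,a}\to\Psi^N_{\rm Lau}$ after adjusting the global phase.
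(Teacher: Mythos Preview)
Your strategy is essentially the paper's: revisit the Dyson-type lower bound to obtain an operator inequality $H^N_{\omega,a}\geq(1-o(1))\,\Pi^*\tilde H^N_{\omega,a}\Pi+c\,Q$, then feed in the spectral gap of $\tilde H$ within the LLL and compare with the upper bound. Your two-step organization (controlling $\|\Pi^\perp\Psi\|$ first via the crude bound $H\geq c\,\Pi^\perp$, then the within-LLL piece via the full Dyson argument) is equivalent to the paper's unified operator inequality and is fine.

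Two small points to tighten. First, the one-body gap is not $3-\omega$: after subtracting $3/2$, the first excited level of $h_0$ sits at $1$ (from the $x^3$-oscillator), and on it $L_z\geq 0$, so the gap of $h_\omega$ on the LLL-complement is $1$ (the paper uses the cruder $\theta-\omega$); this is harmless for your argument but the constant should be corrected. Second, your rewriting $\tilde H^N_{\omega,a}=4\pi a(\cL_N/(4\pi\kappa N)+\Delta_N)$ is exact only when $a/(N\omega)=\kappa$; along the limit you must compare the operator at $\kappa'=a/(N\omega)$ with the one at $\kappa$, which the paper does via the multiplicative estimate $\Pi_2^*\tilde H\Pi_2\geq E^{\rm LLL}(1+\gamma_N(\kappa))(1-|\kappa'/\kappa-1|)(1-|\kappa/\kappa'-1|)\Pi_2^*\Pi_2$. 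Your parenthetical about the yrast degeneracy at $\kappa_j(N)$ addresses the endpoint cases, but the quantitative perturbation in $\kappa'-\kappa$ should be made explicit.
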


Theorem~\ref{cor:CV} shows that the ground state of any system of $N$
spinless trapped bosons with repulsive interactions of scattering
length $a$ is, for small enough $a$ and rotation speed close to the
critical one, well approximated by the ground state of the effective
Hamiltonian (\ref{def:Hamil_LLL}) on the LLL. In particular, the
states are highly correlated and exhibit a bosonic analogue of the
FQHE, with transitions between certain values of the angular momentum
coinciding with discontinuities of the derivative of the yrast curve.

The proof of Theorem~\ref{cor:CV} will be given Section \ref{proof:cor_CV}.
As will be obvious from the method of proof, both
Theorems~\ref{thm:bounds} and~\ref{cor:CV} can be extended to
low-lying excited energy eigenvalues and their corresponding
eigenfunctions as well. The corresponding analysis is similar to
previous studies of the effective one-dimensional behavior of Bose
gases in highly elongated traps \cite{SeiYin-08}, and we shall not
give the details here.

%%%%%%%%%%%%%%%%%%%%%%%%%%%%%%%%%%%%%%%%%%%%%
%%%%%%%%%%%%%%%%%%%%%%%%%%%%%%%%%%%%%%%%%%%%%

\medskip

\section{Proofs}
\subsection{Preliminaries}
In this subsection we shall gather some useful preliminary results
which will be needed in the rest of the proof.  We recall that
$\delta_{ij}$ was defined in \eqref{def:delta_ij}. Similarly the three-body
delta interaction can be defined on $\cB_3$ as
\begin{equation}
\left(\delta_{123}F\right)(z_1,z_2,z_3):=\frac{1}{(\sqrt{3}\pi)^3} F\left(\frac{z_1+z_2+z_3}{3},\frac{z_1+z_2+z_3}{3},\frac{z_1+z_2+z_3}{3}\right)
\end{equation}
where the prefactor was chosen such that
$$
\pscal{F,\delta_{123}F}_{\cB_3}= \left(\int_\R e^{-t^2} dt \right)^{-3} \int_\R dx \int_\C dz |F(z,z,z)|^2 e^{-3|z|^2-3x^2} \,.
$$
It defines a bounded self-adjoint operator on $\cB_3$. In fact, 
\begin{equation}\label{rel:delta}
\delta_{123}\leq \sqrt{\frac{2}{3\pi^3}}\, \delta_{12} \,.
\end{equation}

By definition all functions in $\gH_N$ are smooth. A way to quantify their regularity was provided by Carlen in \cite{Carlen-91}. We state it in the following lemma. 

\begin{lemma}[{\bf An inequality of Carlen \cite{Carlen-91}}]\label{lemma_Carlen} For any $p\in \N$, there exists a constant $C_{p}$ such that for any holomorphic function $f\in\cB_1$
\begin{equation}
\forall z\in\C,\quad \left|\frac{\partial^p f}{\partial z^p}(z)\right|^2e^{-|z|^2} \leq C_{p}(1+|z|^{2p}) \norm{f}_{\cB_1}^2.
\label{Carlen_inequality}
\end{equation}
\end{lemma}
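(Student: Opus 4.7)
The plan is to use the fact that $\mathcal{B}_1$ has an explicit orthonormal basis, so that the pointwise evaluation $f\mapsto(\partial^p f/\partial z^p)(z)$ is a bounded linear functional on $\mathcal{B}_1$ whose norm can be computed. Concretely, the monomials $e_n(z):=z^n/\sqrt{\pi\, n!}$ form an orthonormal basis of $\mathcal{B}_1$, since $\int_{\mathbb{C}} |z|^{2n} e^{-|z|^2}\,dz = \pi\, n!$. Writing $f=\sum_{n\geq 0} a_n z^n\in\mathcal{B}_1$, one has $\|f\|_{\mathcal{B}_1}^2 = \pi\sum_n |a_n|^2\, n!$ and $(\partial^p f/\partial z^p)(z)=\sum_{n\geq p} a_n\, n!/(n-p)!\, z^{n-p}$. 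A Cauchy--Schwarz inequality applied to the splitting $a_n\,\frac{n!}{(n-p)!}z^{n-p}=\bigl(a_n\sqrt{\pi n!}\bigr)\cdot\bigl((\pi n!)^{-1/2}\frac{n!}{(n-p)!}z^{n-p}\bigr)$ then yields, after reindexing $m=n-p$,
$$\left|\frac{\partial^p f}{\partial z^p}(z)\right|^2 \;\leq\; \frac{\|f\|_{\mathcal{B}_1}^2}{\pi}\sum_{m\geq 0}\frac{(m+p)!}{(m!)^2}\, |z|^{2m}.$$

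The main task is then to evaluate the series on the right in closed form. Since $(m+p)!/m!=(m+1)(m+2)\cdots(m+p)$ is a polynomial of degree $p$ in $m$, I would expand it in the falling-factorial basis $(m+1)\cdots(m+p)=\sum_{k=0}^p c_k\, m(m-1)\cdots(m-k+1)$ and use the identity $\sum_{m\geq k} |z|^{2m}/(m-k)!=|z|^{2k} e^{|z|^2}$ to obtain
$$\sum_{m\geq 0}\frac{(m+p)!}{(m!)^2}\, |z|^{2m} \;=\; e^{|z|^2}\sum_{k=0}^p c_k\, |z|^{2k}.$$
Multiplying through by $e^{-|z|^2}$ and bounding the resulting polynomial of degree $p$ in $|z|^2$ by $C_p(1+|z|^{2p})$ gives \eqref{Carlen_inequality}.

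The only nontrivial point is the combinatorial evaluation of the series, which is a routine identity. An entirely equivalent route proceeds via the reproducing kernel $K_z(w)=\pi^{-1}e^{w\bar z}$, under which $(\partial_z^p f)(z)=\pi^{-1}\int_{\mathbb{C}} \overline{w^p e^{w\bar z}}\, f(w)\, e^{-|w|^2}\,dw$; the Cauchy--Schwarz bound then reduces to computing $\int_{\mathbb{C}} |w|^{2p} e^{-|w-z|^2}\,dw$ after the translation $w\mapsto w+z$, and by rotational invariance this integral is, up to a factor of $\pi$, a polynomial of degree $p$ in $|z|^2$ with nonnegative coefficients, giving the same conclusion.
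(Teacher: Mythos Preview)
Your proof is correct. Your second route via the reproducing kernel $K_z(w)=\pi^{-1}e^{w\bar z}$ is precisely the paper's argument: it invokes the coherent state representation $f(z)=\pi^{-1}\int_{\C} f(\xi)e^{\bar\xi z}e^{-|\xi|^2}d\xi$, differentiates $p$ times, and applies Cauchy--Schwarz. Your primary route via the monomial basis is a computationally explicit variant of the same idea (Cauchy--Schwarz in the RKHS $\cB_1$), with the advantage that it makes the polynomial $\sum_{k=0}^p c_k |z|^{2k}$ appear directly rather than through the Gaussian integral $\int_{\C}|w|^{2p}e^{-|w-z|^2}dw$; the two are of course equivalent since the monomials diagonalize the reproducing kernel.
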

\begin{proof}
The proof is a consequence of the Cauchy-Schwarz inequality and the following well-known \emph{coherent state representation} \cite{Bargmann-62,GirJac-84,AftBlaNie-06} for functions $f$ in the Bargmann space $\cB_1$:
\begin{equation}
f(z)=\pi^{-1}\int_{\C}f(\xi)\phi_\xi(z)e^{-|\xi|^2}d\xi
\label{coherent_state_representation}
\end{equation}
where $\phi_\xi(z)=e^{\overline{\xi}z}$.
\end{proof}

Using the smoothness of functions in the LLL, one can control any
interaction potential by a contact interaction, up to an error. This was done first in \cite{AftBlaLew}. 

\begin{lemma}[{\bf Controlling interaction potentials in the LLL}]\label{lem:comparison_delta}
Let $F\in\cB_N$ and $\Psi(\vx_1,\dots,\vx_N):=\pi^{-N/4} F(z_1,\dots,z_N)e^{-\sum_{j=1}^N|\vx_j|^2/2}\in\gH_N$. Let $g\in L^1(\R^3)$ be non-negative and radial. Then we have
\begin{multline}
\pscal{\Psi,\left(\sum_{1\leq i\neq j\leq N}g(\vx_i-\vx_j)\right)\Psi}_{\!\! L^2(\R^{3N})}\!\!\leq \left(\int_{\R^3}g\right)\pscal{F,\left(\sum_{1\leq i\neq j\leq N}\delta_{ij}\right)F}_{\!\!\cB_N}\\
+N^2C\left(\int_{\R^3}g(\vx)\frac{|\vx|^4}{1+|\vx|^4}\,dx\right)\norm{F}_{\cB_N}^2,
\label{estim_2_body}
\end{multline}
and
\begin{multline}
\pscal{\Psi,\left(\sum_{1\leq i\neq j\neq k\leq N}g(\vx_i-\vx_j)g(\vx_j-\vx_k)\right)\Psi}_{\!\! L^2(\R^{3N})}\\
\leq \left(\int_{\R^3}g\right)^2\pscal{F,\left(\sum_{1\leq i\neq j\neq k\leq N}\delta_{ijk}\right)F}_{\!\!\cB_N}\!\! + N^3C\left(\int_{\R^3}g(\vx)\frac{|\vx|^2}{1+|\vx|^2}dx \right)^2\norm{F}_{\cB_N}^2
\label{estim_3_body}
\end{multline}
for a universal constant $C>0$.
\end{lemma}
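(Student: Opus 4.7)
The plan is to reduce both inequalities to a pointwise bound on a single pair of particles and then sum. I focus on the pair $(1,2)$. Introducing relative and center-of-mass coordinates $\vec y = \vec x_1-\vec x_2$, $\vec X = (\vec x_1+\vec x_2)/2$, with $w = z_1-z_2$ and $Z = (z_1+z_2)/2$ in the transverse plane, the Gaussian factor of the LLL decouples as
\[
e^{-|z_1|^2-|z_2|^2-(x_1^3)^2-(x_2^3)^2}=e^{-2|Z|^2-2(X^3)^2}\,e^{-|w|^2/2-(y^3)^2/2}.
\]
The essential claim to establish is, for fixed spectators $\vec X,\vec x_3,\dots,\vec x_N$, a bound of the form
\[
\int_{\R^3}\!g(\vec y)\,|\Psi(\vec X+\vec y/2,\vec X-\vec y/2,\vec x_3,\dots)|^2\,d\vec y\leq \Big(\!\int g\Big)|\Psi(\vec X,\vec X,\vec x_3,\dots)|^2 + (\text{error}),
\]
whereupon integration over the spectators and summation over the $N(N-1)$ ordered pairs yields \eqref{estim_2_body}.

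The decisive observation is that $\tilde F(Z,w):=F(Z+w/2,Z-w/2,z_3,\dots)$ is holomorphic and, by symmetry of $F$ under $z_1\leftrightarrow z_2$, an \emph{even} function of $w$. Hence its Taylor series around $w=0$ contains only even powers: $\tilde F(Z,w)=\tilde F(Z,0)+\tfrac{w^2}{2}\partial_w^2\tilde F(Z,0)+O(w^4)$. When $|\tilde F|^2$ is multiplied by the radial weight $g(|\vec y|)$ and the angular variable of $w$ is integrated, the leading cross term $\mathrm{Re}\big[\overline{\tilde F(Z,0)}\,w^2\,\partial_w^2\tilde F(Z,0)\big]$ vanishes, since $w^2=|w|^2 e^{2i\theta}$ has zero angular average. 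Taylor-expanding the Gaussian $e^{-|w|^2/2-(y^3)^2/2}$ contributes a further correction of order $|\vec y|^2$, but with a definite negative sign, so it can simply be discarded for the upper bound.

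What is left is a Taylor remainder of order $|\vec y|^4$ near the origin. I control it via Carlen's inequality (Lemma~\ref{lemma_Carlen}), which pointwise bounds $|\partial_w^k\tilde F(Z,w)|^2 e^{-|w|^2/2}$ by $C_k(1+|w|^{2k})\norm{F}_{\cB_N}^2$ once the spectator variables are integrated. For $|\vec y|\geq 1$ I dispense with Taylor altogether and invoke the same inequality at $k=0$, which gives $|\Psi|^2\leq C\norm{F}_{\cB_N}^2$ directly. Splitting the integration domain into $\{|\vec y|\leq 1\}$ and $\{|\vec y|\geq 1\}$ and combining the two estimates produces precisely the weight $|\vec y|^4/(1+|\vec y|^4)$.

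The three-body bound \eqref{estim_3_body} is obtained by the same scheme, expanded around the triple-coincidence point $\vec x_1=\vec x_2=\vec x_3$ using the two independent relative coordinates $\vec y_{12}$ and $\vec y_{23}$. Two independent angular averages (over the phases of $w_{12}$ and $w_{23}$) together with holomorphicity of $F$ again kill the leading cross terms, and Carlen's inequality handles the remainders. I expect the main obstacle to lie precisely here: ensuring that the error comes out in the factorized form $\big(\int g(\vec y)|\vec y|^2/(1+|\vec y|^2)\,d\vec y\big)^2$ rather than as an unfactored double integral of $g(\vec y_{12})g(\vec y_{23})$. This should follow by exploiting the product structure of the weight $g(\vec x_1-\vec x_2)g(\vec x_2-\vec x_3)$ in the two relative coordinates and applying the two-body estimate iteratively in each of them.
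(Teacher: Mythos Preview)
Your proposal is correct and follows essentially the same route as the paper: reduce to $N=2$, pass to relative coordinates, exploit that $F$ is holomorphic and even in the relative variable so the Taylor expansion starts at order two, use radiality of $g$ to kill the cross term by angular averaging, control the remainder with Carlen's inequality, and split the integration into $|\vec y|\le 1$ and $|\vec y|\ge 1$. The paper differs only in two cosmetic points: it first integrates out the $x^3$ variables to reduce to a purely two-dimensional relative coordinate $v\in\C$ (so the Taylor expansion is in $v$ alone), and it never Taylor-expands the Gaussian --- the factor $e^{-|v|^2}$ is simply kept as part of the measure and bounded by $1$ only at the very end when comparing to $\int g$. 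Your remark that the Gaussian correction has a favorable sign is correct but unnecessary. For \eqref{estim_3_body} the paper, like you, omits the details as ``analogous''; your observation that the two independent angular averages annihilate the entire cross term $\mathrm{Re}\big[\overline{F_0}(F-F_0)\big]$ (since every monomial in $F-F_0$ carries a positive power of $w_{12}$ or $w_{23}$) is the right mechanism.
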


\begin{proof}
An inequality similar to \eqref{estim_2_body} was derived before in \cite{AftBlaLew}. We shall only write the proof of \eqref{estim_2_body} for $N=2$ (the general case is then obtained by summing over pairs). We shall omit the proof of \eqref{estim_3_body} which is analogous.

Defining 
$G(u,v)=F\left(\frac{u+v}{\sqrt{2}},\frac{u-v}{\sqrt{2}}\right)$, we have  
\begin{align}\nonumber
\pscal{\Psi,g(\vx_1-\vx_2)\Psi}&=\frac 1\pi \int_{\R^3}\int_{\R^3}g(\vx_1-\vx_2)\,|F(z_1,z_2)|^2e^{-|\vx_1|^2-|\vx_2|^2}dx_1\,dx_2\\
&=\int_{\C}\int_{\C}\tilde{g}(\sqrt{2}|v|)\,|G(u,v)|^2e^{-|u|^2-|v|^2}du\,dv \,, \label{eq:introg}
\end{align}
where we have introduced
$$
\tilde{g}(|z|):= \frac 1\pi \int_{\R}\int_\R g(z,x^3_1-x^3_2)e^{-|x^3_1|^2-|x^3_2|^2}dx^3_1\,dx^3_2$$
which is obviously radial, i.e., depends only on $|z|$. 

We split the $v$ integral in (\ref{eq:introg}) into two parts, corresponding to $|v|\leq 1$ and $|v|\geq 1$, respectively. Consider first the case $|v|\leq 1$. 
Using the fact that $G(u,\cdot)$ is even  because $F$ is symmetric, a Taylor expansion yields
$$G(u,v)=G(u,0)+v^2\int_0^1(1-t)\frac{\partial^2 G}{\partial v^2}(u,tv)\,dt \,.$$
By the radiality of $\tilde{g}$  the cross term vanishes when integrating over angles, and hence 
\begin{multline}
 \int_{|v|\leq 1} \tilde{g}(\sqrt{2}|v|)\,|G(u,v)|^2e^{-|v|^2} dv=   \int_{|z|\leq 1} \tilde{g}(\sqrt{2}|v|)\,|G(u,0)|^2e^{-|v|^2}dv \\
+\int_{|v|\leq 1}\tilde{g}(\sqrt{2}|v|)\,|v|^4\left|\int_0^1(1-t)\frac{\partial^2 G}{\partial v^2}(u,tv)\,dt\right|^2e^{-|v|^2}dv\,.
\label{formula_interaction_LLL}
\end{multline}
We integrate this identity against  $e^{-|u|^2}du$. The first term becomes 
$$
\left(\int_{|z|\leq 1/\sqrt{2}}g(z,x^3)e^{-|\vx|^2/2}dx\right)\pscal{F,\delta_{12}F}_{\cB_2} \leq \left(\int_{\R^3} g\right) \pscal{F,\delta_{12}F}_{\cB_2}\,.
$$
With the aid of Carlen's inequality (\ref{Carlen_inequality}), the second term is bounded above by 
\begin{equation*}
C\int_{|z|\leq 1/\sqrt{2}}g(z,x^3)\,|z|^4 \,dx\;\norm{F}_{\cB_2}^2 \leq C'\int_{\R^3} g(\vx) \frac{|\vx|^4}{1+|\vx|^4}\,dx\;\norm{F}_{\cB_2}^2 \,.
\end{equation*}

Finally, for $|v|\geq 1$ we shall use again (\ref{Carlen_inequality}), this time for $p=0$, to conclude that 
\begin{align*}
\int_{\C}du \int_{|v|\geq 1} dv\, \tilde{g}(\sqrt{2}|v|)\,|G(u,v)|^2e^{-|u|^2-|v|^2}
& \leq C \int_{|z|\geq 1/\sqrt{2}} g(z,x^3) dx \;\norm{F}_{\cB_2}^2 
\\ & \leq C'\int_{\R^3} g(\vx) \frac{|\vx|^4}{1+|\vx|^4}\,dx\;\norm{F}_{\cB_2}^2\,.
\end{align*}
This completes the proof.
\end{proof}

\begin{remark}\label{rmk:lower_bound_lemma}\it 
Although we will not need it, we note that \eqref{formula_interaction_LLL} also yields a lower bound:
$$\pscal{\Psi,g(\vec{x}_1-\vec{x}_2)\Psi}_{\gH_2}\geq \left(\int_{\R^3}g(\vec{x})e^{-\frac{|\vec{x}|^2}{2}}\, d{x}\right)\pscal{F,\delta_{12}F}_{\cB_2},$$
where we have used the same notation as in Lemma \ref{lem:comparison_delta}.
Combined with \eqref{estim_2_body} this shows that the restriction of the operator $\epsilon^{-3}g\big((\vec{x}_1-\vec{x}_2)/\epsilon\big)$ to the LLL converges to $(\int_{\R^3}g)\delta_{12}$ as $\epsilon\to0$.
\end{remark}

It will be important to have some \emph{a priori} bounds on the
ground state energy of the effective Hamiltonian (\ref{def:Hamil_LLL}). The
following is certainly not optimal but it has the merit of being
simple.

\begin{lemma}[{\bf Simple bounds on $E^{\rm LLL}_N(\omega,a)$}]\label{lem:bounds_on_E_LLL}
We have, for $\kappa=a/(N\omega)$ and $N\geq 2$,
\begin{equation}
aN\,C\min\left\{\frac{1}{\kappa N},N\right\}\leq E^{\rm LLL}_N(\omega,a)\leq aN\,\min\left\{\frac{1}{\kappa},\sqrt{\frac 2\pi}N\right\}\,.
\label{estim_energy_LLL} 
\end{equation}
\end{lemma}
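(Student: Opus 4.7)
The plan is to establish the upper bound by exhibiting two explicit trial states in $\cB_N$, and to obtain the lower bound from the Papenbrock-type operator inequality already proved in the excerpt during the derivation of \eqref{pap}.

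For the upper bound, I would test $\tilde{H}^N_{\omega,a} = \omega\cL_N + 4\pi a\Delta_N$ on the constant function $F\equiv 1$ and on the bosonic Laughlin state $F^N_{\rm Lau}$. By definition \eqref{def:delta_ij}, $\Delta_N\cdot 1 = (2\pi)^{-3/2}N(N-1)/2$ and $\cL_N\cdot 1 = 0$, so the constant trial state yields $E^{\rm LLL}_N(\omega,a)\leq aN(N-1)/\sqrt{2\pi} \leq \sqrt{2/\pi}\,aN^2$. The Laughlin state vanishes whenever any two of its arguments coincide, so it lies in $\ker\Delta_N$; combined with $\cL_N F^N_{\rm Lau} = N(N-1)F^N_{\rm Lau}$ this gives the competing bound $E^{\rm LLL}_N(\omega,a) \leq \omega N(N-1) \leq \omega N^2 = aN/\kappa$. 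The claimed upper bound is the minimum of the two.

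For the lower bound I would start from the operator inequality obtained in the argument preceding \eqref{pap}, by discarding the non-negative term $\tfrac14(\sum_i z_i)(\sum_i\partial_{z_i})$:
\[
(2\pi)^{3/2}\Delta_N \;\geq\; \frac{N(N-1)}{2} - \frac{N\cL_N}{4}.
\]
Multiplying by $4\pi a/(2\pi)^{3/2} = a\sqrt{2/\pi}$ and combining with $\Delta_N\geq 0$ produces
\[
\tilde{H}^N_{\omega,a} \;\geq\; \omega\cL_N + \sqrt{\tfrac{2}{\pi}}\,a\left[\tfrac{N(N-1)}{2} - \tfrac{N\cL_N}{4}\right]_+
\]
as an operator on $\cB_N$. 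Since $[\tilde{H}^N_{\omega,a},\cL_N]=0$ I may decompose any trial state into sectors of fixed $\cL_N = L$, so the problem reduces to minimizing the piecewise-linear function $h(L) := \omega L + \sqrt{2/\pi}\,a\,[N(N-1)/2 - NL/4]_+$ over $L\in\N$. This $h$ has slope $\omega - aN\sqrt{2/\pi}/4$ on $[0,2(N-1)]$ and slope $\omega$ on $[2(N-1),\infty)$, so its minimum is attained either at $L=0$ (value $\sqrt{2/\pi}\,aN(N-1)/2$) or at $L=2(N-1)$ (value $2\omega(N-1)$), according to the sign of that slope. Either way, using $N-1\geq N/2$ for $N\geq 2$, one concludes $E^{\rm LLL}_N(\omega,a) \geq C\min\{aN^2, N\omega\} = CaN\min\{1/(\kappa N),N\}$, which is the stated inequality.

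No substantive obstacle arises: the upper bound is just two trial-state computations, and the lower bound follows directly from the operator inequality already in hand, combined with an elementary one-dimensional minimization.
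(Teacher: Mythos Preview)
Your proof is correct. The upper bound is identical to the paper's (same two trial states), but your lower bound argument is genuinely different from the paper's.

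The paper obtains the lower bound by a two-body reduction: it rewrites $\omega\cL_N=\frac{\omega}{N-1}\sum_{i<j}(L_{z_i}+L_{z_j})$ and bounds each pair term from below by $c\min\{\omega/(N-1),4\pi a\}$, where $c=\inf\sigma_{\cB_2}(L_1+L_2+\delta_{12})>0$ is the two-particle gap. Your route instead recycles the Papenbrock-type operator inequality $(2\pi)^{3/2}\Delta_N\geq\frac{N(N-1)}2-\frac{N\cL_N}4$ already established in the text, combines it with $\Delta_N\geq0$ (legitimate since $\Delta_N$ commutes with $\cL_N$, so the positive part makes sense as an operator), and then performs an explicit one-variable minimization over the angular-momentum sectors. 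Your argument is more self-contained in that it does not invoke an auxiliary two-body spectral gap, and it yields an explicit constant ($C=\sqrt{2/\pi}/4$) rather than the implicit $c/2$; on the other hand, the paper's two-body reduction is more robust in the sense that it does not rely on any particular structure of $\Delta_N(L)$ beyond the pairwise form of the interaction.
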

\begin{proof}
The upper bound is obtained by taking as trial states the Laughlin
  function and the constant function, respectively. For the lower
  bound, we note that
\begin{multline*}
 \sum_{1\leq i<j\leq N}\left(\frac{\omega}{N-1}(L_{z_i}+L_{z_j})+4\pi a\delta_{ij}\right)\\
\geq c\min\left\{\omega N/2\,,\, 2\pi aN(N-1)\right\}\geq\frac{c}2\min\left\{\kappa^{-1}a\,,\, aN^2\right\}
\end{multline*}
where $c=\inf\sigma_{\cB_2}(L_1+L_2+\delta_{12})>0$.
\end{proof}

Except for the prefactor, the upper bound is expected to be sharp. In other words, the lower bound should hold without the factor $1/N$ multiplying $\kappa^{-1}$, for an appropriate constant $C$. This remains an open problem, however.

\medskip

%%%%%%%%%%%%%%%%%%%%%%%%%%%%%%%%%%%%%%
\subsection{Proof of Theorem \ref{thm:bounds}}\label{sec:proof_Thm}

\subsubsection*{\textbf{Step 1: Upper Bound}}
We start by proving the upper bound, using the variational
principle. The main difficulty is to get the scattering length in
front of the interaction potential. As suggested first by Dyson in
\cite{Dyson-57}, this is done by multiplying a trial state $\Phi$ of
$\gH_N$ by a correlated function $S$ accounting for the short scale
structure of the ground state. Compared to
previous similar arguments in
\cite{LieSeiYng-00,Sei-03}, a new complication comes from the that the trial state $\Phi$ in $\gH_N$
is not a simple product function, but itself already a (possibly)
highly correlated state of which little is known. Fortunately,
the information that $\Phi\in\gH_N$ combined with simple bounds on
$E^{\rm LLL}_N(\omega,a)$ will allow to get the desired upper bound.

Let $F_{N,\omega,a}\in\cB_N$ be a normalized ground state for the LLL
Hamiltonian $\tilde{H}^N_{\omega,a}$ defined in \eqref{def:Hamil_LLL},
which is a common eigenvector of $\cL_N$ and $\Delta_N$. We consider
the following trial state:
\begin{equation}
\Psi_{N,\omega,a}:=S_{N,a}\, \Phi_{N,\omega,a}
\label{trial_state}
\end{equation}
where 
$$S_{N,a}(\vx_1,\dots,\vx_N):=\prod_{1\leq i< j\leq N} f_a(|\vx_i-\vx_j|)$$
for some $0\leq f_a\leq1$ which will be defined later, and
$$\Phi_{N,\omega,a}(\vx_1,\dots,\vx_N):=\pi^{-N/4} F_{N,\omega,a}(z_1,\dots,z_N)e^{-\sum_{i=1}^N\frac{|\vx_i|^2}{2}}.$$
Note that the norm of $\Phi_{N,\omega,a}\in L^2(\R^{3N})$ equals the norm of $F_{N,\omega,a}\in \cB_N$.
We write 
$$H^N_{\omega,a}=\sum_{j=1}^N(h_{\omega})_{j}+\sum_{i<j}W_a(\vx_i-\vx_j)$$
where
\begin{equation}
h_{\omega}=\frac{|\vp-\ve_3\times \vx|^2}2+\frac{|x^3|^2}{2}-\frac32+\omega \ve_3\cdot \vL \,.
\label{one-body_2a}
\end{equation}
We shall also use the notation $H_{\omega,0}^N = \sum_{j} (h_\omega)_j$ for short. 
Using the fact that $f_a$ is real, we can argue as in \cite[Eq.~(4.64)]{Sei-03} to get the identity
$$
\pscal{\Psi_{N,\omega,a},H_{\omega,0}^N\Psi_{N,\omega,a}}=\sum_{j=1}^N\int \frac{|\vec\nabla_{j}S_{N,a}|^2}{2}|\Phi_{N,\omega,a}|^2
+\Re\pscal{S_{N,a}^2\Phi_{N,\omega,a}\,,\,H_{\omega,0}^N\Phi_{N,\omega,a} }.
$$
Using \eqref{eq:angular_momentum} and that $F_{N,\omega,a}$ is a normalized eigenvector of $\cL_N$,  we have
$$
H_{\omega,0}^N \Phi_{N,\omega,a}=\omega\pscal{F_{N,\omega,a},\cL_N F_{N,\omega,a}}_{\cB_N}\Phi_{N,\omega,a} \,,$$
hence
\begin{multline*}
\pscal{\Psi_{N,\omega,a},H_{\omega,0}^N \Psi_{N,\omega,a}}=\sum_{j=1}^N\int \frac{|\vec\nabla_{j}S_{N,a}|^2}{2}|\Phi_{N,\omega,a}|^2\\
+\norm{\Psi_{N,\omega,a}}^2\pscal{\Phi_{N,\omega,a},H_{\omega,0}^N\Phi_{N,\omega,a}}\,.
\end{multline*}
We compute
\begin{equation*}
\vec\nabla_{k}S_{N,a}=\sum_{i\neq k}f_a'(|\vx_k-\vx_i|)\frac{\vx_k-\vx_i}{|\vx_k-\vx_i|}\prod_{\substack{1\leq m<n\leq N\\ \{m,n\}\neq\{i,k\}}}f(|\vx_m-\vx_n|)\,.
\end{equation*}
Using $0\leq f\leq 1$ we therefore get
\begin{equation*}
 \frac12\sum_{k=1}^N|\vec\nabla_{k}S_{N,a}|^2 \leq \sum_{1\leq i<j\leq N}f_a'(|\vx_i-\vx_j|)^2
+ \frac12\sum_{i\neq j\neq k}f_a'(|\vx_i-\vx_j|)f_a'(|\vx_k-\vx_j|).
\end{equation*}
We finally deduce that
\begin{multline*}
\pscal{\Psi_{N,\omega,a},H^N_{\omega,a}\Psi_{N,\omega,a}}\leq \norm{\Psi_{N,\omega,a}}^2\pscal{\Phi_{N,\omega,a},H^N_{\omega,0}\Phi_{N,\omega,a}}\\
+\pscal{\Phi_{N,\omega,N,a},\left(\sum_{1\leq i<j\leq N}\big[(f_a')^2+W_af_a^2\big](|\vx_i-\vx_j|)\right)\Phi_{N,\omega,N,a}}\\
+\frac12\pscal{\Phi_{N,\omega,N,a},\left(\sum_{i\neq j\neq k}f_a'(|\vx_i-\vx_j|)f_a'(|\vx_k-\vx_j|)\right)\Phi_{N,\omega,N,a}}.
\end{multline*}

The next step is to bound the terms on the right hand side of the
previous inequality. An essential tool is the inequality
\eqref{estim_2_body} of Lemma \ref{lem:comparison_delta} which
relates, on the LLL, the interaction of a smooth potential
$\sum_{i<j}g(\vx_i-\vx_j)$ with that of the contact interaction with
coefficient $\int_{\R^3} g$. As we will see, for a correct choice of
$f_a$, we will have $\int[(f_a')^2+W_af_a^2]\simeq 4\pi a$, as
desired. Let $g_a:=(f'_a)^2+W_a(f_a)^2$. Using
Lemma~\ref{lem:comparison_delta} as well as the bound
(\ref{rel:delta}), we obtain
\begin{multline*}
\pscal{\Psi_{N,\omega,a}\,,H^N_{\omega,a}\,\Psi_{N,\omega,a}} \leq E^{\rm LLL}_N(\omega,a)\norm{\Psi_{N,\omega,a}}^2\\
+\left(\int_{\R^3}g_a-4\pi a\norm{\Psi_{N,\omega,a}}^2+\sqrt{\frac{2}{3\pi^3}} N\left(\int f_a'\right)^2\right)\pscal{ F_{N,\omega,a}\,,\Delta_N\,F_{N,\omega,a}}_{\cB_N}\\
+CN^2\int_{\R^3}g_a(\vx)\frac{|\vx|^4}{1+|\vx|^4}dx+CN^3\left(\int_{\R^3}f_a'(|\vx|)\frac{|\vx|^2}{1+|\vx|^2}dx\right)^2.
\end{multline*}
For an upper bound, we can use $\pscal{ F_{N,\omega,a}\,,\Delta_N\,F_{N,\omega,a}}_{\cB_N}\leq E_N^{\rm LLL}(\omega,a)/(4\pi a)$, hence
\begin{multline*}
\pscal{\Psi_{N,\omega,a}\,,H^N_{\omega,a}\,\Psi_{N,\omega,a}} \leq \frac{E^{\rm LLL}_N(\omega,a)}{4\pi a} \left(\int_{\R^3}g_a+\sqrt{\frac{2}{3\pi^3}} N\left(\int f_a'\right)^2\right)\\
+CN^2\int_{\R^3}g_a(\vx)\frac{|\vx|^4}{1+|\vx|^4}dx+CN^3\left(\int_{\R^3}f_a'(|\vx|)\frac{|\vx|^2}{1+|\vx|^2}dx\right)^2.
\end{multline*}

Let us now choose $f_a$. As in \cite{LieSeiYng-00,LieSeiSolYng-05} we
take, for some $b>a$ to be specified later,
\begin{equation}\label{def:fa}
f_a(s):=\left\{\begin{array}{ll}\frac{u_a(s)/s}{u_a(b)/b}&\text{if $0\leq s\leq b$}\\ 1 & \text{if $s\geq b$}\end{array}\right.
\end{equation}
where $u_a$ is the solution of the scattering equation
$$-u_a''(s)+W_a(s)u_a(s)=0$$
with $u_a(0)=0$ and $\lim_{s\to\ii}u'_a(s)=1$. Integrating by parts and using that $0\leq u_a(s)\leq s$ and $0\leq s u_a'(s) - u_a(s) \leq a$ \cite{LieSeiYng-00,LieSeiSolYng-05} we see that  
$$
\int_{\R^3}g_a=\int_{\R^3}[(f_a')^2+W_af_a^2]\leq \frac{4\pi a}{1-a/b}\,.
$$
By splitting the integral into a part $|x|\leq (a b^3)^{1/4}$ and $|x|\geq (a b^3)^{1/4}$, one checks that
$$
\int_{\R^3}g_a(|\vx|)\frac{|\vx|^4}{1+|\vx|^4}\,dx\leq \frac{C}{1-a/b}\left( a^2 b^3  + a b^4 \int_{|\vx|\geq (b/a)^{3/4}} W(\vx)  dx \right)\,. 
$$
Note that for $a\ll b$ the second term in the last bracket is small compared to the first one if $W$ decays faster than $|\vx|^{-3-4/3}$ at infinity.
We further have 
$$
\int_{\R^3}f'_a\leq \frac{4\pi ab}{1-a/b}\,,
$$
and hence 
$$
\int_{\R^3}f'_a(|\vx|)\frac{|\vx|^2}{1+|\vx|^2}\,dx\leq \frac{4\pi ab^3}{1-a/b}\,.
$$

Let us assume, for simplicity, that $b \geq 2a $. Then
\begin{multline*}
\pscal{\Psi_{N,\omega,a}\,,H^N_{\omega,a}\,\Psi_{N,\omega,a}} \leq E^{\rm LLL}_N(\omega,a) \left( 1+ C\left[ \frac a b + N a b^2 \right] \right)\\
+C N^2 a^2 b^3 \left(1+ N b^3+ \frac ba \int_{|\vx|\geq (b/a)^{3/4}} W(\vx) dx \right)
\end{multline*}
for some constant $C>0$. To bound the last term relative to the first one, we can use the lower bound of Lemma~\ref{lem:bounds_on_E_LLL} to conclude that
\begin{multline}\label{eq:last_estimate_upper_bound}
\pscal{\Psi_{N,\omega,a}\,,H^N_{\omega,a}\,\Psi_{N,\omega,a}} \leq E^{\rm LLL}_N(\omega,a) \biggl( 1+ C \biggl[ \frac a b + N a b^2  \\ +  \frac{\kappa N^2 a b^3}{\min\{1,\kappa N^2\}} \biggl(1+ N b^3+ \frac ba \int_{|\vx|\geq (b/a)^{3/4}} W(\vx)  dx \biggl) \biggl]\biggl)\,.
\end{multline}

It remains to derive a lower bound on
$\norm{\Psi_{N,\omega,a}}$. Arguing as in \cite{LieSeiYng-04} we can
bound
\begin{align*}\nonumber
\norm{\Psi_{N,\omega,a}} ^2&=\int_{\R^3}\cdots\int_{\R^3}\prod_{1\leq i<j\leq N}f_a(|\vx_i-\vx_j|)^2|\Phi_{N,\omega,a}|^2\\ \nonumber 
&\geq 1-\sum_{1\leq i<j\leq N}\int_{\R^3}\cdots\int_{\R^3}(1-f_a^2)(|\vx_i-\vx_j|)|\Phi_{N,\omega,a}|^2\\ \nonumber
&\geq 1-CN^2\int_{\R^3}(1-f_a^2)(|\vx|)\frac{|\vx|^4}{1+|\vx|^4}\,dx\\
&\qquad\qquad\qquad\qquad-\left(\int_{\R^3}(1-f_a^2)\right)\pscal{F_{N,\omega,a},\Delta_N F_{N,\omega,a}}_{\cB_N}
\end{align*}
where we have used again Lemma \ref{lem:comparison_delta}. For our
choice of $f_a$ in (\ref{def:fa}), it is easy to see
\cite{LieSeiYng-00,LieSeiSolYng-05} that
$$
\int_{\R^3}(1-f_a^2)\leq 4\pi a b^2
$$
and hence 
$$
\int_{\R^3}(1-f_a^2)(|\vx|)\frac{|\vx|^4}{1+|\vx|^4}\,dx\leq 4\pi a b^6  \,.
$$
Finally, we can use Lemma~\ref{lem:bounds_on_E_LLL} to bound
$\pscal{F_{N,\omega,a},\Delta_N F_{N,\omega,a}}_{\cB_N}\leq E_N^{\rm
  LLL}(\omega,a)/(4\pi a) \leq (2\pi)^{-3/2} N \kappa^{-1}$.
This yields
\begin{equation}
\norm{\Psi_{N,\omega,a}} ^2 \geq  1-CN a b^2 \left( \frac 1\kappa +  N b^4 \right) \label{eq:last_ub2}
\end{equation}
for an appropriate constant $C>0$. 

Combining (\ref{eq:last_estimate_upper_bound}) and (\ref{eq:last_ub2}), the choice $b=2\kappa^{-1/4}N^{-1/2}$ leads to the desired inequality \eqref{eq:upper_bound}.

\begin{remark}\it
If we had a lower bound 
$$E^{\rm LLL}_N(\omega,a)\geq c\, Na \kappa^{-1}\, ,$$
as is expected for $\kappa\gtrsim N^{-1}$, the upper bound of
Theorem~\ref{thm:bounds} could be somewhat improved. In
(\ref{eq:last_estimate_upper_bound}) $\min\{1,\kappa N^2\}$ could be
replaced by $N \min\{1,\kappa N\}$ in the denominator, and the optimal
choice of $b$ would then be $b=(\kappa N)^{-1/3}$. The main error term
would then be of the order $a\kappa^{1/3}N^{1/3}$ instead of
$a\kappa^{1/4}N^{1/2}$.
\end{remark}

\medskip

%%%%%%%%%%%%%%%%%%%%%%%%%%%%%%%%%%%%%%%%%%%%%
\subsubsection*{\textbf{Step 2: Lower Bound}}
As a first step, we shall replace $W_a$ by the finite range potential
$W_{a,R_0}=W_a\chi(|\vx|\leq R_0)$ for some $R_0$ to be chosen
later. Since $W_a$ is assumed to be non-negative, this is legitimate
for a lower bound.  We denote by $a(R_0)$ the scattering length of
$W_{a,R_0}$. If $R_0$ is large enough compared to $a$, we will have
$a(R_0)\simeq a$. Indeed, let us recall that \cite{LieSeiSolYng-05}
\begin{equation}
  4\pi a\geq 4\pi a(R_0)=\int_{\R^3}W_{a,R_0}f_{a,R_0}\geq \int_{\R^3}W_{a,R_0}f_a\geq 4\pi a-\int_{|\vx|\geq R_0}W_a
\label{estim_scattering_lower_bound} 
\end{equation}
where $f_a \leq f_{a,R_0}$ are the solutions of the zero-scattering
equations corresponding to $W_a$ and $W_{a,R_0}$, respectively.

We continue with a lemma inspired by a method of Dyson
\cite{Dyson-57}. The key idea is to replace the \lq\lq hard\rq\rq\
interaction potential $W_a$ by a softer one using parts of the kinetic
energy, with this softer potential being close to $4\pi a \delta$ when
projected to the LLL. Note that this step is essential, it is not
possible to project the original $W_a$ to the LLL level. This would
also lead to a $\delta$ interaction for small $a$, but with the wrong
coupling constant $\int W_a$ instead of $4\pi a$ (see Remark \ref{rmk:lower_bound_lemma}). Compared to previous
studies \cite{LieYng-98,LieSeiSol-05,LieSei-06,SeiYin-08b} where a
similar strategy has been applied, the main new difficulty comes from
the fact that the effective kinetic energy $h_0= (|\vp-\ve_3\times
\vx|^2 + |x^3|^2 - 3)/2$ is not positive {\it locally}, i.e., on a
domain with Neumann boundary conditions, but is positive only on the
whole of $\R^3$. To circumvent this problem, we shall rewrite
$\pscal{\Psi,h_0\Psi}$ for $\Psi\in L^2(\R^3)$ as
\begin{equation}\label{rewr}
\pscal{\Psi,h_0\Psi} = \frac 12 \int_{\R^3} e^{-|\vx|^2} \left( |\partial_{x^3} \psi(\vx)|^2 +
    |\partial_{\bar z} \psi(\vx)|^2 \right) dx
\end{equation}
with $\psi(\vx) = e^{|\vx|^2/2} \Psi(\vx)$ and $\partial_{\bar z}
= \partial_{x^1} + i \partial_{x^2}$. The integrand on the right side
is now positive but contains no derivatives  with respect to $z$ and is
hence weaker than $|\vec\nabla \psi|^2$. Nevertheless we shall show in the
next lemma that it is still strong enough to accomplish the goal of
replacing $W_a$ by a softer potential for a lower bound. The resulting
\lq\lq potential\rq\rq\ turns out not be a potential in the usual
sense of a multiplication operator, but rather is a non-local operator
which has the property that its projection to the LLL is proportional
to a $\delta$-function, however.

\begin{lemma}[{\bf Dyson-type inequality}]\label{lem:Dyson}
Let  $\vy=(s,y^3)\in \R^3$. For $R>R_0$, we have for all $\psi$
\begin{multline}\label{dysl}
  \int_{|\vx-\vy|\leq R} e^{-|\vx|^2} \left( |\partial_{x^3} \psi(\vx)|^2 +
    |\partial_{\bar z} \psi(\vx)|^2 + W_a(\vx-\vy) |\psi(\vx)|^2 \right)dx \\
  \geq 4\pi a(R_0)\, e^{-(|y^3|+R)^2 + |s|^2} \left| \frac 1{4\pi R^2}
    \int_{|\vx-\vy|= R} e^{-\bar s z} \psi(\vx)\, dx \right|^2\,.
\end{multline}
\end{lemma}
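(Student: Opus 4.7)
The proof plan has three main steps. \textbf{First}, I would introduce the substitution $\phi(\vx) := e^{-\bar s z}\psi(\vx)$. Since $e^{-\bar s z}$ is holomorphic in $z$ (so $\partial_{\bar z}(e^{-\bar s z}) = 0$) and independent of $x^3$, the derivatives satisfy $\partial_{x^3}\psi = e^{\bar s z}\partial_{x^3}\phi$, $\partial_{\bar z}\psi = e^{\bar s z}\partial_{\bar z}\phi$, and $|\psi|^2 = e^{2\Re(\bar s z)}|\phi|^2$. Combining with the Gaussian and using the algebraic identity
\begin{equation*}
-|\vx|^2 + 2\Re(\bar s z) \;=\; |s|^2 - |x^3|^2 - |z-s|^2,
\end{equation*}
the LHS of \eqref{dysl} becomes $e^{|s|^2}$ times an integral of $e^{-|x^3|^2-|z-s|^2}$ against the modified integrand in $\phi$, while the boundary integral on the RHS simplifies to $\int_{|\vx-\vy|=R}\phi\,d\sigma$.

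\textbf{Second}, I would translate $\vx' := \vx-\vy$ (so $z' = z-s$), and use $|x^3| = |x'^3 + y^3| \leq |y^3|+R$ on the ball $B(0,R)$ to lower bound the $x^3$-part of the weight by $e^{-(|y^3|+R)^2}$. After factoring out $e^{|s|^2-(|y^3|+R)^2}$, the statement reduces to the weighted Dyson-type inequality
\begin{equation*}
\int_{B_R} e^{-|z'|^2}\bigl(|\partial_{x'^3}\phi|^2 + |\partial_{\bar{z'}}\phi|^2 + W_a(\vx')|\phi|^2\bigr)\,dx' \;\geq\; 4\pi a(R_0)\left|\frac{1}{4\pi R^2}\int_{\partial B_R}\phi\,d\sigma\right|^2
\end{equation*}
for the translated $\phi$.

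\textbf{Third}, I would prove the weighted Dyson estimate in the spirit of Dyson and Lieb--Seiringer--Yngvason. Replace $W_a$ by $W_{a,R_0}$ (this is a lower bound since $W_a \geq W_{a,R_0}$), and let $u(r)$ be the radial scattering solution of $-\Delta u + W_{a,R_0}u = 0$ on $B_R$ with $u(R) = 1$; explicitly $u(r) = (1 - a(R_0)/r)/(1 - a(R_0)/R)$ for $r\geq R_0$. Writing $\phi = uv$ and expanding, the scattering equation converts $(u')^2 + W_{a,R_0}u^2$ into $r^{-2}\partial_r(r^2 uu')$, which merges with the cross terms coming from $|\partial_{x'^3}(uv)|^2 + |\partial_{\bar{z'}}(uv)|^2$ to form a total divergence $\mathrm{div}(uu'|v|^2\,\hat r)$. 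Integrating by parts against $e^{-|z'|^2}$ and using $\nabla e^{-|z'|^2} = -2\vx'_\perp e^{-|z'|^2}$ gives
\begin{equation*}
\text{LHS} \;=\; u'(R)\!\int_{\partial B_R}\!\!e^{-|z'|^2}|\phi|^2\,d\sigma + \int_{B_R}\!e^{-|z'|^2}u^2\bigl(|\partial_{x'^3}v|^2 + |\partial_{\bar{z'}}v|^2\bigr)\,dx' + 2\!\int_{B_R}\!e^{-|z'|^2}uu'\frac{|z'|^2}{r}|v|^2\,dx'.
\end{equation*}
All three contributions are non-negative, and combining them via Jensen's inequality on the boundary (together with $u'(R)\geq a(R_0)/R^2$) delivers the desired lower bound.

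The hard part is the third step: the Gaussian weight $e^{-|z'|^2}$ is strictly less than $1$ on $\partial B_R$, so the bare boundary term is not large enough by itself, and the bulk ``weight-correction'' $2\int e^{-|z'|^2}uu'|z'|^2|v|^2/r\,dx'$ has to carefully compensate. Moreover, for complex-valued $v$ the expansion of $|\partial_{\bar{z'}}(uv)|^2$ produces an additional ``angular-momentum-current'' cross term $\propto (\vx'_\perp\times\mathrm{Im}(\bar v\nabla v))_3/r$ of indefinite sign, which must be absorbed into the positive bulk kinetic contribution by a Cauchy--Schwarz estimate.
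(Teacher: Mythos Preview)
Your Steps 1 and 2 are correct and match the paper's reduction. The gap is that you stop too early when bounding the Gaussian weight: you only estimate the $x^3$-part and carry the factor $e^{-|z'|^2}$ into the reduced inequality. On the ball $|\vx'|\leq R$ one actually has the full bound
\[
|z'|^2 + (x'^3+y^3)^2 \;\leq\; |z'|^2 + (|x'^3|+|y^3|)^2 \;\leq\; R^2 + 2R|y^3| + |y^3|^2 \;=\; (|y^3|+R)^2,
\]
so the \emph{entire} weight $e^{-|z'|^2-(x'^3+y^3)^2}$ is bounded below by $e^{-(|y^3|+R)^2}$. After factoring this constant out, the problem reduces to the \emph{unweighted} inequality
\[
\int_{B_R}\!\Bigl(|\partial_{x'^3}\phi|^2+|\partial_{\bar z'}\phi|^2+W_{a,R_0}(\vx')|\phi|^2\Bigr)\,dx' \;\geq\; 4\pi a(R_0)\left|\frac{1}{4\pi R^2}\int_{\partial B_R}\!\phi\,d\sigma\right|^2,
\]
which is strictly easier than the weighted version you wrote down. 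This is precisely what the paper does.

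For this unweighted inequality the paper avoids the $\phi=uv$ decomposition altogether and uses a one-line Cauchy--Schwarz argument: the zero-energy scattering solution $f_{a,R_0}$ is real, so $|\partial_{\bar z'}f_{a,R_0}|^2=|\partial_{x'^1}f_{a,R_0}|^2+|\partial_{x'^2}f_{a,R_0}|^2$ and hence $\int_{B_R}(|\partial_{x'^3}f|^2+|\partial_{\bar z'}f|^2+W_{a,R_0}f^2)=\int_{B_R}(|\nabla f|^2+W_{a,R_0}f^2)\leq 4\pi a(R_0)$. Cauchy--Schwarz against this, followed by integration by parts using $-\Delta f+W_{a,R_0}f=0$ and $|\nabla f|=a(R_0)/r^2$ on $\partial B_R$, yields the boundary integral directly.

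Your Step 3, by contrast, is not complete. As you yourself note, the weighted boundary term $u'(R)\int_{\partial B_R}e^{-|z'|^2}|\phi|^2$ is strictly smaller than what is needed (Jensen/Cauchy--Schwarz on the sphere goes the wrong way because of the weight), and you have not shown that the bulk ``weight-correction'' integral actually compensates. In addition, the expansion of $|\partial_{\bar z'}(uv)|^2$ for complex $v$ produces the indefinite angular-momentum cross term you mention, and ``absorb by Cauchy--Schwarz'' is not justified. Both difficulties disappear once you remove the $e^{-|z'|^2}$ weight as above.
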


Note that if $\psi\in\cB_1$, one has
$$\frac 1{4\pi R^2} \int_{|\vx-\vy|= R} e^{-\bar s z} \psi(\vx)\, dx=e^{-|s|^2}\psi(\vy)\,,$$
and the right side of (\ref{dysl}) equals $4\pi a(R_0) e^{-(|y^3|+R)^2
  - |s|^2} |\psi(\vy)|^2$ which is precisely $4\pi
a(R_0)\pscal{\psi,\delta_y\psi}_{\cB_1}$ when $R=0$.

\begin{proof}[Proof of Lemma \ref{lem:Dyson}]
Let $g(\vx) = e^{-\bar s z}\psi(\vx +\vy)$. Using that $W_a\geq W_{a,R_0}$, we have to show that
\begin{multline}
  \int_{|\vx|\leq R} e^{-|z|^2-(x^3+y^3)^2} \left( |\partial_{x^3} g(\vx)|^2 + |\partial_{\bar z} g(\vx)|^2 + W_{a,R_0}(\vx) |g(\vx)|^2 \right)\,dx \\
  \geq 4\pi a(R_0)\, e^{-(|y^3|+R)^2 } \left| \frac 1{4\pi R^2}
    \int_{|\vx|=R} g(\vx)\, dx \right|^2\,.
\end{multline}
Since  $|z|^2 +(x^3+y^3)^2 \leq (|y^3| + R)^2$ in the integrand on the left, this will follow if we can show that
\begin{multline}\label{toshow}
  \int_{|\vx|\leq R}  \left( |\partial_{x^3} g(\vx)|^2 + |\partial_{\bar z} g(\vx)|^2 + W_{a,R_0}(\vx) |g(\vx)|^2 \right)dx\\
  \geq 4\pi a(R_0) \left| \frac 1{4\pi R^2} \int_{|\vx|=R} g(\vx)\, dx
  \right|^2\,.
\end{multline}
Let now $f_{a,R_0}$ be the solution of the zero-energy scattering
equation $-\Delta f_{a,R_0} + W_{a,R_0} f_{a,R_0} = 0$, subject to the
normalization $\lim_{|\vx|\to \infty} f_{a,R_0}(\vx) = 1$. Since
$f_{a,R_0}$ is real-valued, $|\partial_{\bar z} f_{a,R_0}|^2 =
|\partial_{x^1} f_{a,R_0}|^2 + |\partial_{x^2} f_{a,R_0}|^2$, and
hence
$$
\int_{|\vx|\leq R}  \left( |\partial_{x^3} f_{a,R_0}(\vx)|^2 + |\partial_{\bar z} f_{a,R_0}(\vx)|^2 + W_{a,R_0}(\vx) |f_{a,R_0}(\vx)|^2 \right)dx  \leq 4\pi a(R_0) \,.
$$
The Cauchy-Schwarz inequality implies that 
\begin{multline}
\int_{|\vx|\leq R}  \left( |\partial_{x^3} g(\vx)|^2 + |\partial_{\bar z} g(\vx)|^2 + W_{a,R_0}(\vx) |g(\vx)|^2 \right)\,dx\\ 
\geq  \frac 1{4\pi a(R_0)}\left|\int_{|\vx|\leq R}  \big( \partial_{x^3} f_{a,R_0} \partial_{x^3} g+ \partial_{z}f_{a,R_0} \partial_{\bar z} g + W_{a,R_0} f_{a,R_0} g \big)  \right|^2\,. 
\end{multline}
Using partial integration, the zero-energy scattering equation as well
as the fact that $|\nabla f_{a,R_0}(\vx)|= a/|\vx|$ for $|\vx|\geq
R_0$ this yields (\ref{toshow}).
\end{proof}

As an immediate corollary, we see that for any non-negative function
$\rho$ supported on $[R_0,R]$ with $\int_{R_0}^R \rho\leq 1$,
\begin{multline}\label{dyslc}
  \int_{|\vx-\vy|\leq R} e^{-|\vx|^2} \left( |\partial_{x^3} \psi(\vx)|^2 +
    |\partial_{\bar z} \psi(\vx)|^2 + W_a(\vx-\vy) |\psi(\vx)|^2 \right)dx \\
  \geq 4\pi a(R_0)\, e^{-(|y^3|+R)^2 + |s|^2} \int_{R_0}^R dr\,
  \rho(r) \left| \frac 1{4\pi r^2} \int_{|\vx-\vy|=r} e^{-\bar s z} \psi
  \right|^2\,.
\end{multline}
We shall apply this inequality to the Hamiltonian $H^N_{\omega,a}$,
for each particle separately, considering the other $N-1$ particles as
fixed. Consider first particle one, and assume that all particles
$k\geq 2$ are located at a distance $\geq2R$ from each other, i.e.,
that $|\vx_k-\vx_\ell|\geq 2R$ for all $k,\ell=2,\dots,N$. Then we get,
for all functions $F(\vx_1,\dots,\vx_N)$,
\begin{align*}\label{dysld}
&\int_{\R^3}dx_1\; e^{-|\vx_1|^2} \Bigg( |\partial_{x_1^3} F(\vx_1,\dots,\vx_N)|^2
+ |\partial_{\bar z_1} F(\vx_1,\dots,\vx_N)|^2 \\
&\qquad\qquad\qquad\qquad\qquad\qquad+ \sum_{j=2}^NW_a(\vx_j- \vx_1) |F(\vx_1,\dots,\vx_N|^2 \Bigg) \\
& \geq 4\pi a(R_0)\sum_{j=2}^N e^{-(|x_j^3|+R)^2 + |z_j|^2}\!\! \int_{R_0}^R dr\,\! \rho(r) \!\left| \frac 1{4\pi r^2} \int_{|\vx_1-\vx_j|=r}\!\!\! e^{-\overline{z_j} z_1}F(\vx_1,\dots,\vx_N)\, dx_1 \right|^2\,.
\end{align*}
In general, we can get the same bound if we only retain on the right
side the $\vx_j$'s for $j=2,\dots,N$ which are at a distance $\geq 2R$ from
all the others. Using (\ref{one-body_2a}) and (\ref{rewr}), we
conclude that, for any $0\leq \theta\leq 1$,
\begin{equation}\label{lbh}
H^N_{\omega,a} \geq  \sum_{j=1}^N \big(  \theta \left(h_0\right)_j+ \omega(\ve_3\cdot \vL)_j \big) + 4\pi a(R_0) (1-\theta) \sum_{1\leq i< j\leq N} U_{ij} \,,
\end{equation}
where the potential $U_{ij}$ is defined as
\begin{multline}
\pscal{\Psi,U_{12}\Psi} = \int dx_2\cdots\int dx_N\, e^{-|\vx_2|^2} e^{-(|x_2^3|+R)^2 + |z_2|^2}   e^{-\sum_{k=3}^N|\vx_k|^2}  \times\\ 
\times \prod_{k=3}^N \chi_{|\vx_k-\vx_2|\geq 2R}\int_{R_0}^R dr\, \rho(r) \left| \frac 1{4\pi r^2} \int_{|\vx_1-\vx_2|=r} e^{-\overline{z_2} z_1} F(\vx_1,\dots,\vx_N)\, dx_1  \right|^2,
\end{multline}
with $F(\vx_1,\dots,\vx_N) = \pi^{N/4} \Psi(\vx_1,\dots,\vx_2) \prod_{j=1}^N e^{|\vx_j|^2/2}$ and $\vx_1=(z_1,x_1^3)$, $\vx_2=(z_2,x_2^3)$.

The new potential $\sum_{i<j} U_{ij}$ is a complicated $N$-body term which has the advantage of being bounded, however. To be precise, the following bound holds.

\begin{lemma}\label{lem:bu}
We have
\begin{equation}\label{normu}
\| U_{12} \| \leq \sup_{r} \frac{\rho(r)}{4\pi r^2}\,.
\end{equation}
\end{lemma}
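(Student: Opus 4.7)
The plan is to bound the quadratic form $\pscal{\Psi,U_{12}\Psi}$ directly via Cauchy--Schwarz on the inner surface integral, followed by a careful analysis of the Gaussian weights. I would apply the surface Cauchy--Schwarz inequality
$$\left|\frac{1}{4\pi r^2}\int_{|\vx_1-\vx_2|=r}e^{-\overline{z_2}z_1}F\,d\sigma(\vx_1)\right|^2 \;\leq\; \frac{1}{4\pi r^2}\int_{|\vx_1-\vx_2|=r}e^{-2\Re(\overline{z_2}z_1)}|F|^2\,d\sigma(\vx_1)$$
inside the formula for $\pscal{\Psi,U_{12}\Psi}$, bound $\rho(r)/(4\pi r^2)\leq \sup_r\rho(r)/(4\pi r^2)$, and reassemble the combined $r$-integral and surface measure into the volume measure over the spherical shell $\{R_0\leq|\vx_1-\vx_2|\leq R\}$. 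This gives
$$\pscal{\Psi,U_{12}\Psi}\leq \sup_r\frac{\rho(r)}{4\pi r^2}\int dx_1\cdots dx_N\,\chi_{R_0\leq|\vx_1-\vx_2|\leq R}\prod_{k\geq 3}\chi_{|\vx_k-\vx_2|\geq 2R}\,\mathcal{G}\,|\Psi|^2\,,$$
where $\mathcal{G}$ collects the Gaussian weights from the definition of $U_{12}$ together with $e^{-2\Re(\overline{z_2}z_1)}$ and $|F|^2/|\Psi|^2$.

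All the Gaussian contributions in $\vx_3,\ldots,\vx_N$ cancel, so $\mathcal{G}$ reduces to a function of $(\vx_1,\vx_2)$ only. Using the identity $|z_2|^2 - 2\Re(\overline{z_2}z_1) + |z_1|^2 = |z_1-z_2|^2$, the total exponent simplifies to
$$E(\vx_1,\vx_2) \;=\; |z_1-z_2|^2 + (x_1^3)^2 - (|x_2^3|+R)^2\,.$$
The key algebraic claim is that $E\leq 0$ whenever $|\vx_1-\vx_2|\leq R$. Setting $t := |x_1^3-x_2^3|$ and $s := |x_2^3|$, we have $t\leq R$, $|z_1-z_2|^2 \leq R^2 - t^2$ and $(x_1^3)^2\leq (s+t)^2$, which gives
$$E \leq R^2 - t^2 + (s+t)^2 - (s+R)^2 \;=\; 2s(t-R) \;\leq\; 0\,.$$
Hence $\mathcal{G}\leq 1$ on the support and the remaining integral is bounded by $\int|\Psi|^2\,dx_1\cdots dx_N = \|\Psi\|^2$, proving the claim.

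The only step requiring any thought is the exponent estimate on $E$: it is precisely the reason the Gaussian weights $e^{-|\vx_2|^2}$ and $e^{-(|x_2^3|+R)^2+|z_2|^2}$ enter with their specific form in the definition of $U_{12}$, inherited from the right-hand side of the Dyson-type inequality of Lemma~\ref{lem:Dyson}. The other steps---Cauchy--Schwarz on the sphere, converting $dr\,d\sigma$ into $dx_1$, and completing the square on $z_1,z_2$---are routine, and I anticipate no serious obstacle beyond the algebraic manipulation above.
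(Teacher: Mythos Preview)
Your proposal is correct and follows essentially the same route as the paper: Cauchy--Schwarz on the spherical integral, reduction to the pointwise exponent estimate $|z_1-z_2|^2+(x_1^3)^2\leq(|x_2^3|+R)^2$ on $\{|\vx_1-\vx_2|\leq R\}$, and then a bound by $\|\Psi\|^2$. The only cosmetic difference is that the paper applies Cauchy--Schwarz with the split $e^{-\overline{z_2}z_1}F=(e^{-|\vx_1|^2/2}F)(e^{|\vx_1|^2/2}e^{-\overline{z_2}z_1})$ and keeps $\rho(|\vx_1-\vx_2|)/(4\pi|\vx_1-\vx_2|^2)$ inside the integral until the end, whereas you use the Jensen form directly and extract the supremum first; both lead to the identical key inequality, and your explicit verification of $E\leq 0$ via $2s(t-R)\leq 0$ is exactly what the paper asserts without writing out.
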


\begin{proof}
Applying the Cauchy-Schwarz inequality to the $x_1$ integration, we see that
$$
\left| \int_{|\vx_1-\vx_2|=r}d x_1\, e^{-\overline{z_2} z_1} \psi \right|^2 \leq  \int_{|\vx_1-\vx_2|=r}d x_1\, e^{-|\vx_1|^2} |\psi|^2 \   \int_{|\vx_1-\vx_2|=r}dx_1\,  e^{| \vx_1|^2 - 2 {\rm Re\,} \bar z_2 z_1}\,.
$$
It is easy to check that $-(|x_2^3|+R)^2 + |z_2|^2 + | \vx_1|^2 - 2 {\rm Re\,} \bar z_2 z_1 \leq 0$ for $| \vx_1- \vx_2| \leq R$. Hence 
$$
\pscal{\Psi, U_{12} \Psi} \leq \int\cdots \int |\Psi(\vx_1,\dots,\vx_N)|^2 \frac {\rho(|\vx_1-\vx_2|)}{4\pi | \vx_1-\vx_2|^2}\,dx_1\cdots dx_N\,.
$$
This proves the claim.
\end{proof}

In order to minimize the right side of (\ref{normu}), we shall choose
\begin{equation}\label{def:rho}
\rho(r) = \frac { 3 r^2}{R^3 - R_0^3} \quad {\rm for} \quad R_0\leq r\leq R.
\end{equation}
Note that $\int_{R_0}^R\rho(r)dr=1$.

\medskip

We shall now apply a standard perturbation theory argument to our
Hamiltonian. Let $P$ denote the orthogonal projection onto $\gH_N$
(the LLL for all the $N$ particles), and let $Q=1-P$. Let
$$
A = \sum_{j=1}^N \left(  \theta \left(h_0\right)_j + \omega(\ve_3\cdot \vL)_j \right)
$$
and 
$$
B =  4\pi a(R_0) (1-\theta) \sum_{i< j} U_{ij}
$$
so that the right side of (\ref{lbh}) equals $A+B$. We have $A = PAP + QAQ$. Since $B$ is positive we can use the Cauchy-Schwarz inequality to get the lower bound
$$
B \geq (1-\delta) PBP + \left( 1 - \delta^{-1} \right) QBQ
$$
for any $0<\delta<1$. Using Lemma~\ref{lem:bu} with the choice (\ref{def:rho}) for $\rho$  we see that
$$
QBQ \leq Q \, \frac {6\pi a N(N-1)}{R^3 - R_0^3}\,.
$$ 
(Recall that $a(R_0)\leq a$.) 
Moreover, for $\theta> \omega$, 
$$
QAQ \geq Q \, \left( \theta  - \omega\right)\,.  
$$
We therefore conclude that 
\begin{equation}
A+ B \geq (1-\delta) P (A+B) P + Q \left( \theta -  \omega -\frac {6\pi a N^2}{\delta(R^3 - R_0^3)}\right)\,. 
\label{estim:A_plus_B} 
\end{equation}
In particular, 
$$
E_N(\omega,a ) \geq  \min\left\{ (1-\delta)\, \inf\sigma (A+B)\restriction_{\gH_N}\, , \,  \theta - \omega -\frac {6\pi a N^2}{\delta(R^3 - R_0^3)}\right\}
$$
and it remains to study $A+B$ restricted to $\gH_N$.

For $\Psi(\vx_1,\dots,\vx_N) = \pi^{-N/4} \prod_{j=1}^N
e^{-|\vx_j|^2/2}F(z_1,\dots,z_N)$ a bosonic function in $\gH_N$, we have
$$
\pscal{\Psi,A\Psi} = 
\omega\pscal{F,\cL_N F}_{\cB_N}
$$
by \eqref{eq:angular_momentum}, and 
\begin{multline}
\pscal{\Psi,B\Psi} = 2\pi a(R_0) N(N-1) (1-\theta)  \int d x_2\, e^{|\vx_2|^2 - (|x_2^3|+R)^2}\\
\times  \int\prod_{j\geq 3}\chi_{|\vx_j- \vx_2|\geq 2R}\, dx_j \, |\Psi( \vx_2, \vx_2,  \vx_3,\dots, \vx_N)|^2
\end{multline}
where we have used that $F$ is analytic in $z_1$ and $z_2$.
For a lower bound, we use 
$$
\prod_{j\geq 3} \chi_{|\vx_j- \vx_2|\geq 2R} \geq 1 - \sum_{j\geq 3}  \chi_{| \vx_j- \vx_2|\leq 2R}\,.
$$
Letting  
$$
t_R = \frac 2{\sqrt\pi}  \int_0^\infty dt \, e^{- (t+R)^2} \geq 1 - \frac {2R}{\sqrt\pi}\,,
$$
we finally get 
\begin{multline}
\pscal{\Psi,B\Psi}\geq 4\pi a(R_0)  (1-\theta) t_R \pscal{F,\left(\sum_{1\leq i<j\leq N}\delta_{ij}\right)F}_{\cB_N}\\ 
- 2\pi a(R_0) N^2(N-1)(1-\theta)  \int dx_2 \cdots\int dx_N \, \chi_{|\vx_3-\vx_2|\leq 2R} |\Psi(\vx_2,\vx_2,\vx_3,\dots,\vx_N)|^2.
\label{eq:last_lower_bound}
\end{multline}
By Carlen's inequality \eqref{Carlen_inequality} with $p=0$ and $\vx_2,\vx_4,\dots,\vx_N$ fixed, we have
$$|\Psi(\vx_2,\vx_2,\vx_3,\dots,\vx_N)|^2\leq C\int_{\R^3}|\Psi(\vx_2,\vx_2,\vx_3,\dots,\vx_N)|^2\, dx_3 \,,$$
therefore
$$
 \int_{\R^3} dx_3 \, \chi_{|\vx_3-\vx_2|\leq 2R} |\Psi(\vx_2,\vx_2,\vx_3,\dots,\vx_N)|^2 \leq C R^3 \int_{\R^3} d \vx_3 \,  |\Psi(\vx_2,\vx_2,\vx_3,\dots,\vx_N)|^2\,.
$$
We conclude that
$$
(A+B)\restriction_{\gH_N} \geq \frac{(1-\theta)\left( t_R - C R^3 N\right)a(R_0)}{a} E_N^{\rm LLL}(\omega,a).
$$
Our final inequality is therefore
\begin{multline}
E_N(\omega,a ) \geq  \min\Bigg\{\frac{(1-\delta)(1-\theta)\left( t_R - C R^3 N\right)a(R_0)}{a} E_N^{\rm LLL}(\omega,a)\, , \\ \theta - \omega -\frac {6\pi a N^2}{\delta(R^3 - R_0^3)}\Bigg\}.
\end{multline}

We now optimize constants. Recall from \eqref{estim_energy_LLL} that
$E^{\rm LLL}_N(\omega,a)\leq \kappa^{-1}aN$. We choose $R_0=a^{1/9}$, $R^3=2R_0^3$, $\theta=\omega+ 2 a^{1/3} N $ and $\delta=6\pi a^{1/3} N$ such that
$$
\theta - \omega -\frac {6\pi a N^2}{\delta(R^3 - R_0^3)}=  Na^{1/3} \,.
$$
Assuming that $\kappa \geq  a^{2/3}$ this expression is greater than $E^{\rm LLL}_N(\omega,a)$. 
Recalling (\ref{estim_scattering_lower_bound}) the final result is then
\begin{equation}
 E_N(\omega,a ) \geq E^{\rm LLL}_N(\omega,a)\left(1-\omega-\frac 1{4\pi}\int_{|\vx|\geq a^{-8/9}} W(\vx)dx - C\left[ a^{1/3} N +a^{1/9}\right]\right)\,.
\end{equation}

If $\kappa < a^{2/3}$, the choice $R_0=\kappa^{1/6}$, $R^3=2R_0^3$, $\theta=\omega + 2 Na \kappa^{-1}$ and $\delta = 6\pi N \kappa^{1/2}$ yields the desired bound. 
This completes the proof of Theorem \ref{thm:bounds}.\qed
\medskip

\subsection{Proof of Theorem~\ref{cor:CV}}\label{proof:cor_CV}
The proof is a simple consequence of the  bounds in the previous subsection, together with the right choice of parameters. If we choose $R_0$, $R$ and $\delta$ as above, but $\theta$ to be bigger than the previous choice by an amount $\theta'\geq 0$, we conclude that
\begin{multline}
H^N_{\omega,a}\geq \left(1- \theta'-  C \left[ \frac{ N a^{1/3}}r + a^{1/9} r \right] - \frac{1}{4\pi}\int_{|\vx|\geq r^{1/6} a^{-8/9}} W \right) \Pi^*\tilde{H}^N_{\omega,a}\Pi\\
+ \left( \theta' + E^{\rm LLL}_N(\omega,a) \right) Q 
\end{multline}
where $\Pi=\pi^{N/4}e^{\sum_{j=1}^N|\vx_j|^2/2}P$ denotes the projection
from $L^2(\R^{3N})$ onto $\cB_N$, and $r=\min\{1,\kappa a^{-2/3}\}$ as in the statement of Theorem~\ref{thm:bounds}. Let
$\Pi_1:=\pi^{N/4} e^{\sum_{j=1}^N|\vx_j|^2/2}P_N(\kappa)$ denote the projection from $L^2(\R^{3N})$ onto 
the ground eigenspace of the operator $\cL_N/N^2+4\pi
\kappa\Delta_N/N$, and
$\gamma_N(\kappa)=(\lambda_2-\lambda_1)/\lambda_1>0$ where $\lambda_1$
and $\lambda_2$ are the first and second eigenvalues of
this operator, respectively. Introducing $\Pi_2=\Pi-\Pi_1$, we have
$$\Pi^*\tilde{H}^N_{\omega,a}\Pi= \Pi^*_1\tilde{H}^N_{\omega,a}\Pi_1+\Pi^*_2\tilde{H}^N_{\omega,a}\Pi_2\geq E_N^{\rm LLL}(\omega,a)\,P_N(\kappa)+\Pi^*_2\tilde{H}^N_{\omega,a}\Pi_2.$$
Note that
\begin{align*}
\Pi^*_2\tilde{H}^N_{\omega,a}\Pi_2&=N^2\omega\Pi^*_2\left(\frac{\cL_N}{N^2}+4\pi\frac{a}{N\omega}\frac{\Delta_N}{N}\right)\Pi_2\\
&\geq N^2\omega\left(1-\left|\frac{a}{N\omega\kappa}-1\right|\right)\Pi^*_2\left(\frac{\cL_N}{N^2}+4\pi\kappa\frac{\Delta_N}{N}\right)\Pi_2\\
&\geq E^{\rm LLL}_N(\omega,a)(1+\gamma_N(\kappa))\left(1-\left|\frac{a}{N\omega\kappa}-1\right|\right)\left(1-\left|\frac{N\omega\kappa}{a}-1\right|\right)\Pi^*_2\Pi_2.
\end{align*}
For $a$, $\omega$, and $\kappa-a/(N\omega)$ small enough, we obtain
\begin{multline*}
H^N_{\omega,a}\geq E^{\rm LLL}_N(\omega,a) \left(1- \theta'-  C \left[ \frac{ N a^{1/3}}r + a^{1/9} r \right] - \frac{1}{4\pi}\int_{|\vx|\geq r^{1/6} a^{-8/9}} W \right)\\+E^{\rm LLL}_N(\omega,a)\frac{\gamma_N(\kappa)}{2}\Pi^*_2\Pi_2+ \theta' Q\,.
\end{multline*}
Combined with the upper bound \eqref{eq:upper_bound}, this clearly yields $\norm{\Pi_2\Psi^N_{\omega,a}}\to0$ and $\norm{Q\Psi^N_{\omega,a}}\to0$, as was claimed.\qed

\bigskip

\noindent\textbf{Acknowledgments.} The authors would like to thank Elliott H. Lieb for fruitful discussions. Financial support from the ANR project ACCQuaRel of the French ministry of research (M.L) and the U.S. National Science Foundation under grant No. PHY-0652356  (R.S.) is gratefully acknowledged. 

%%%%%%%%%%%%%%%%%%%%%%%%%%%%%%%%%%%%%%%%%%%%%
%%%%%%%%%%%%%%%%%%%%%%%%%%%%%%%%%%%%%%%%%%%%%
\bibliographystyle{siam}
%\bibliography{biblio}

\begin{thebibliography}{10}

\bibitem{AftBla-06}
{\sc A.~Aftalion and X.~Blanc}, {\em Vortex lattices in rotating
  {B}ose--{E}instein condensates}, SIAM Journal on Mathematical Analysis, 38
  (2006), pp.~874--893.

\bibitem{AftBla-08}
{\sc A.~Aftalion and X.~{Blanc}}, {\em {Reduced energy functionals for a
  three-dimensional fast rotating {B}ose-{E}instein condensates}}, Annales
  Henri Poincar{\'e}, 25 (2008), pp.~339--355.

\bibitem{AftBlaDal-05}
{\sc A.~Aftalion, X.~Blanc, and J.~Dalibard}, {\em Vortex patterns in a fast
  rotating {B}ose-{E}instein condensate}, Phys. Rev. A, 71 (2005), p.~023611.

\bibitem{AftBlaLew}
{\sc A.~Aftalion, X.~Blanc, and M.~Lewin}, {\em Unpublished}.

\bibitem{AftBlaNie-06}
{\sc A.~Aftalion, X.~Blanc, and F.~Nier}, {\em Lowest {L}andau level functional
  and {B}argmann spaces for {B}ose-{E}instein condensates}, J. Funct. Anal.,
  241 (2006), pp.~661--702.

\bibitem{BakYanLan-07}
{\sc L.~O. {Baksmaty}, C.~{Yannouleas}, and U.~{Landman}}, {\em {Rapidly
  rotating boson molecules with long- or short-range repulsion: An exact
  diagonalization study}}, Phys. Rev. A, 75 (2007), p.~023620.

\bibitem{Bargmann-62}
{\sc V.~Bargmann}, {\em On the representations of the rotation group}, Rev.
  Mod. Phys., 34 (1962), p.~829.

\bibitem{BerPap-99}
{\sc G.~Bertsch and T.~Papenbrock}, {\em Yrast line for weakly interacting
  trapped bosons}, Phys. Rev. Lett., 83 (1999), pp.~5412--5414.

\bibitem{ButRok-99}
{\sc D.~A. {Butts} and D.~S. {Rokhsar}}, {\em {Predicted signatures of rotating
  Bose-Einstein condensates}}, Nature, 397 (1999), pp.~327--329.

\bibitem{Carlen-91}
{\sc E.~Carlen}, {\em Some integral identities and inequalities for entire
  functions and their application to the coherent state transform}, J. Funct.
  Anal., 97 (1991), pp.~231--249.

\bibitem{Cooper-08}
{\sc N.~R. {Cooper}}, {\em {Rapidly rotating atomic gases}}, Advances in
  Physics, 57 (2008), pp.~539--616.

\bibitem{CooKomRea-04}
{\sc N.~R. {Cooper}, S.~{Komineas}, and N.~{Read}}, {\em {Vortex lattices in
  the lowest Landau level for confined Bose-Einstein condensates}}, Phys. Rev.
  A, 70 (2004), p.~033604.

\bibitem{CooWil-99}
{\sc N.~R. Cooper and N.~K. Wilkin}, {\em Composite fermion description of
  rotating {B}ose-{E}instein condensates}, Phys. Rev. B, 60 (1999),
  pp.~R16279--R16282.

\bibitem{CooWilGun-01}
{\sc N.~R. Cooper, N.~K. Wilkin, and J.~M.~F. Gunn}, {\em Quantum phases of
  vortices in rotating {B}ose-{E}instein condensates}, Phys. Rev. Lett., 87
  (2001), p.~120405.

\bibitem{DalGioPitStr-99}
{\sc F.~Dalfovo, S.~Giorgini, L.~P. Pitaevskii, and S.~Stringari}, {\em Theory
  of {B}ose-{E}instein condensation in trapped gases}, Rev. Mod. Phys., 71
  (1999), pp.~463--512.

\bibitem{Dyson-57}
{\sc F.~J. Dyson}, {\em Ground-state energy of a hard-sphere gas}, Phys. Rev.,
  106 (1957), pp.~20--26.

\bibitem{GirJac-84}
{\sc S.~Girvin and T.~Jach}, {\em Formalism for the quantum {H}all effect:
  {H}ilbert space of analytic functions}, Phys. Rev. B, 29 (1984),
  pp.~5617--5625.

\bibitem{Gross-61}
{\sc E.~Gross}, {\em Structure of a quantized vortex in boson systems}, Nuovo
  Cimento, 20 (1961), pp.~454--477.

\bibitem{HusVor-02}
{\sc M.~Hussein and O.~Vorov}, {\em Generalized yrast states of a
  {B}ose-{E}instein condensate in a harmonic trap for a universality class of
  interactions}, Phys. Rev. A, 65 (2002), p.~035603.

\bibitem{Laughlin-83}
{\sc R.~B. Laughlin}, {\em Anomalous quantum hall effect: An incompressible
  quantum fluid with fractionally charged excitations}, Phys. Rev. Lett., 50
  (1983), pp.~1395--1398.

\bibitem{LieSei-02}
{\sc E.~H. Lieb and R.~Seiringer}, {\em {Proof of Bose-Einstein Condensation
  for Dilute Trapped Gases}}, Phys. Rev. Lett., 88 (2002), p.~170409.

\bibitem{LieSei-06}
\leavevmode\vrule height 2pt depth -1.6pt width 23pt, {\em Derivation of the
  {G}ross-{P}itaevskii equation for rotating {B}ose gases}, Comm. Math. Phys.,
  264 (2006), pp.~505--537.

\bibitem{LieSeiSol-05}
{\sc E.~H. Lieb, R.~Seiringer, and J.~P. Solovej}, {\em Ground-state energy of
  the low-density {F}ermi gas}, Phys. Rev. A, 71 (2005), p.~053605.

\bibitem{LieSeiSolYng-05}
{\sc E.~H. Lieb, R.~Seiringer, J.~P. Solovej, and J.~Yngvason}, {\em The
  mathematics of the {B}ose gas and its condensation}, Oberwolfach {S}eminars,
  Birkh{\"a}user, 2005.

\bibitem{LieSeiYng-00}
{\sc E.~H. Lieb, R.~Seiringer, and J.~Yngvason}, {\em Bosons in a trap: A
  rigorous derivation of the {G}ross-{P}itaevskii energy functional}, Phys.
  Rev. A, 61 (2000), p.~043602.

\bibitem{LieSeiYng-04}
\leavevmode\vrule height 2pt depth -1.6pt width 23pt, {\em One-dimensional
  behavior of dilute, trapped {B}ose gases}, Commun. Math. Phys., 244 (2004),
  pp.~347--393.

\bibitem{LieSeiYng-09}
\leavevmode\vrule height 2pt depth -1.6pt width 23pt, {\em The {Y}rast line of
  a rapidly rotating {B}ose gas: the {G}ross-{P}itaevskii regime}.
\newblock Preprint \texttt{arXiv:0904.1750}, Phys. Rev. A (in press), 2009.

\bibitem{LieYng-98}
{\sc E.~H. Lieb and J.~Yngvason}, {\em Ground state energy of the low density
  {B}ose gas}, Phys. Rev. Lett., 80 (1998), pp.~2504--2507.

\bibitem{MasMatOuv-07}
{\sc S.~Mashkevich, S.~Matveenko, and S.~Ouvry}, {\em Exact results for the
  spectra of bosons and fermions with contact interaction}, Nuclear Physics B,
  763 (2007), pp.~431--444.

\bibitem{MorFed-06}
{\sc A.~G. Morris and D.~L. Feder}, {\em Validity of the {L}owest {L}andau
  {L}evel approximation for rotating {B}ose gases}, Physical Review A, 74
  (2006), p.~033605.

\bibitem{Mottelson-99}
{\sc B.~Mottelson}, {\em Yrast spectra of weakly interacting {B}ose-{E}instein
  condensates}, Phys. Rev. Lett., 83 (1999), pp.~2695--2698.

\bibitem{PapBer-01}
{\sc T.~Papenbrock and G.~F. Bertsch}, {\em Rotational spectra of weakly
  interacting {B}ose-{E}instein condensates}, Phys. Rev. A, 63 (2001),
  p.~023616.

\bibitem{Pitaevskii-61}
{\sc L.~P. Pitaevskii}, {\em Vortex lines in an imperfect bose gas}, Zh.
  Eksper. Teor. fiz., 40 (1961), pp.~646--651.

\bibitem{RegChaJolJai-06}
{\sc N.~{Regnault}, C.~C. {Chang}, T.~{Jolicoeur}, and J.~K. {Jain}}, {\em
  {Composite fermion theory of rapidly rotating two-dimensional bosons}},
  Journal of Physics B, 39 (2006), pp.~S89--S99.

\bibitem{RegJol-04}
{\sc N.~Regnault and T.~Jolicoeur}, {\em Quantum hall fractions for spinless
  bosons}, Phys. Rev. B, 69 (2004), p.~235309.

\bibitem{RegJol-07}
{\sc N.~{Regnault} and T.~{Jolicoeur}}, {\em {Parafermionic states in rotating
  {B}ose-{E}instein condensates}}, Phys. Rev. B, 76 (2007), p.~235324.

\bibitem{Sei-03}
{\sc R.~Seiringer}, {\em Ground state asymptotics of a dilute, rotating gas},
  J. Phys. A, 36 (2003), pp.~9755--9778.

\bibitem{SeiYin-08b}
{\sc R.~Seiringer and J.~Yin}, {\em Ground state energy of the low density
  {H}ubbard model}, J. Stat. Phys., 131 (2008), pp.~1139--1154.

\bibitem{SeiYin-08}
\leavevmode\vrule height 2pt depth -1.6pt width 23pt, {\em The {L}ieb-{L}iniger
  model as a limit of dilute bosons in three dimensions}, Commun. Math. Phys.,
  284 (2008), pp.~459--479.

\bibitem{VieHanRei-00}
{\sc S.~Viefers, T.~H. Hansson, and S.~M. Reimann}, {\em Bose condensates at
  high angular momenta}, Phys. Rev. A, 62 (2000), p.~053604.

\end{thebibliography}

\end{document}